\newtheorem{theorem}{Theorem}
\newtheorem{lemma}[theorem]{Lemma}
\newtheorem{corollary}[theorem]{Corollary}
\newcommand{\mcA}{\mathcal{A}}
\newcommand{\mcB}{\mathcal{B}}
\newcommand{\mcC}{\mathcal{C}}
\newcommand{\mcE}{\mathcal{E}}
\newcommand{\mcH}{\mathcal{H}}
\newcommand{\mcI}{\mathcal{I}}
\newcommand{\mcL}{\mathcal{L}}
\newcommand{\mcO}{\mathcal{O}}
\newcommand{\mcP}{\mathcal{P}}
\newcommand{\mcR}{\mathcal{R}}
\newcommand{\mcS}{\mathcal{S}}
\newcommand{\mcU}{\mathcal{U}}
\newcommand{\mcV}{\mathcal{V}}
\newcommand{\mcX}{\mathcal{X}}
\newcommand{\mcY}{\mathcal{Y}}
\newcommand{\mcZ}{\mathcal{Z}}
\newcommand{\mfS}{\mathfrak{S}}
\newcommand{\U}{\mathrm{U}}
\newcommand{\CC}{\mathbb{C}}
\newcommand{\dket}[1]{\vert {#1} \rangle\!\rangle}
\newcommand{\dketbra}[1]{\vert {#1} \rangle\!\rangle\!\langle\!\langle {#1} \vert}
\newcommand{\1}{\mathds{1}}
\begin{document}

\preprint{APS/123-QED}
\title{Random dilation superchannel}

\author{Satoshi Yoshida}
\email{satoshiyoshida.phys@gmail.com}
\affiliation{Department of Physics, Graduate School of Science, The University of Tokyo, Hongo 7-3-1, Bunkyo-ku, Tokyo 113-0033, Japan}
\author{Ryotaro Niwa}
\affiliation{Department of Physics, Graduate School of Science, The University of Tokyo, Hongo 7-3-1, Bunkyo-ku, Tokyo 113-0033, Japan}
\author{Takeru Utsumi}
\affiliation{Graduate School of Arts and Sciences, The University of Tokyo, 3-8-1 Komaba, Meguro-ku, Tokyo 153-8902, Japan}
\author{Ryuji Takagi}
\affiliation{Graduate School of Arts and Sciences, The University of Tokyo, 3-8-1 Komaba, Meguro-ku, Tokyo 153-8902, Japan}
\author{Mio Murao}
\affiliation{Department of Physics, Graduate School of Science, The University of Tokyo, Hongo 7-3-1, Bunkyo-ku, Tokyo 113-0033, Japan}
\affiliation{Trans-scale Quantum Science Institute, The University of Tokyo, Hongo 7-3-1, Bunkyo-ku, Tokyo 113-0033, Japan}
\date{\today}

\begin{abstract}
  We present a quantum circuit that implements the \emph{random dilation superchannel}, transforming parallel queries of an unknown quantum channel into the same number of parallel queries of a randomly chosen dilation isometry of the input channel.
  This is a natural generalization of the random purification channel, that transforms copies of an unknown mixed state to copies of a randomly chosen purification state.
  The circuit complexity of our construction is $O(\mathrm{poly}(n, \log d_{\mcI}, \log d_{\mcO}))$, where $n$ is the number of queries and $d_{\mcI}$ and $d_{\mcO}$ are the input and output dimensions of the input channel, respectively.
  This random dilation superchannel is extended to the sequential queries approximately, by transforming the parallel random dilation isometry into sequential random dilation unitaries with $O(\mathrm{poly}(d_{\mcI}))$ overhead in the number of queries.
  We also show that our results can be further extended to the case of quantum superchannels.
  On the other hand, we show a no-go theorem on the exact random dilation of sequential queries with $o(\mathrm{poly}(\min\{d_{\mcI}, d_{\mcO}\}))$ query overhead, showcasing a fundamental difference between the parallel and sequential cases.
  As an application, we show an efficient storage-and-retrieval of an unknown quantum channel, which improves the program cost exponentially in the retrieval error $\varepsilon$.
  For the case where the Kraus rank $r$ is the least possible (i.e., $r = d_{\mcI}/d_{\mcO}$), we show quantum circuits that transform $n$ parallel queries of an unknown quantum channel $\Lambda$  to $\Theta(n^\alpha)$ parallel queries of $\Lambda$ for any $\alpha<2$ approximately, and implement its Petz recovery map for the maximally mixed reference state probabilistically and exactly.
\end{abstract}

\maketitle

\section{Introduction}

Any quantum state $\rho\in \mcL(\CC^d)$\footnote{In this work, $\mcL(\mcV)$ denotes the set of linear operators acting on the vector space $\mcV$.} can be represented as a reduced state of a pure state $\ket{\psi} \in \CC^d \otimes \CC^d$ called a purification of $\rho$, i.e., $\rho = \Tr_{\CC^d}[\ketbra{\psi}]$.
Although universal transformation from a mixed state to its purification is impossible due to the no-purification theorem~\cite{liu2025no}, it is possible to construct a quantum channel that transforms $n$ copies of an unknown quantum state $\rho$ into $n$ copies of a \emph{randomly chosen} purification of $\rho$~\cite{soleimanifar2022testing,chen2024local,tang2025conjugate}.
We call such a quantum channel the \emph{random purification channel} following Ref.~\cite{girardi2025random}.
This technique is used to lift the distinguishability of mixed states to that of bipartite pure states~\cite{soleimanifar2022testing}, lift the property testing of mixed states to that of bipartite pure states~\cite{chen2024local}, lift the sample access to the query access of a state preparation unitary~\cite{tang2025conjugate}, realize smaller sampling cost for fidelity estimation~\cite{utsumi2025quantum}, construct the optimal mixed state tomography~\cite{pelecanos2025mixed} and channel tomography~\cite{mele2025optimal}, and show Uhlmann's theorem for quantum divergences~\cite{girardi2025random}, with extensions to the fermionic and bosonic systems~\cite{walter2025random,mele2025random}, showcasing the power of random purification in analyzing various operational tasks.

This series of developments motivates us to extend the notion of random purification to quantum channels.
Any quantum channel $\Lambda: \mcL(\mcI) \to \mcL(\mcO)$ admits its Stinespring representation $\Lambda(\cdot) = \Tr[V(\cdot)V^\dagger]$ using an isometry $V: \mcI \to \mcE \otimes \mcO$ called the dilation isometry and an auxiliary system $\mcE$.
Since the Stinespring dilation can be considered as a natural generalization of the purification, we consider the following task: given $n$ queries of an unknown quantum channel $\Lambda:\mcL(\mcI) \to \mcL(\mcO)$, implement $n$ queries of a randomly chosen dilation isometry of $\Lambda$.
We call such a universal transformation \emph{random dilation superchannels}, where superchannels are universal transformations of a channel to another channel implementable within quantum mechanics~\cite{zyczkowski2008quartic,chiribella2008transforming,chiribella2008quantum,oreshkov2012quantum}.
Since processing unknown quantum channels in a quantum circuit is often more challenging than processing their Stinespring dilation~\cite{gilyen2022quantum,Niwa2025udt}, it is of great interest to decide if such a random dilation superchannel exists and is efficiently implementable.
Indeed, the feasibility of constructing random dilation superchannels was posed as an important open problem in Ref.~\cite{tang2025conjugate}.
In Ref.~\cite{chen2025quantum}, a transformation of a quantum tester, a special type of quantum superchannels that transform a quantum channel to a measurement outcome, is presented.
However, it is difficult to assess whether it is efficiently implementable in a quantum circuit, since it relies on an existence proof in representation theory, and no explicit quantum circuit is provided.

This work provides a partial answer to the above open problem by constructing an efficient quantum circuit implementing the random dilation superchannel that transforms $n$ parallel queries of an unknown quantum channel $\Lambda$ into $n$ parallel queries of a randomly chosen dilation isometry of $\Lambda$.
Our construction is based on the quantum Fourier transform over the symmetric group and the quantum Schur transform, both of which can be efficiently implemented~\cite{beals1997quantum,kawano2016quantum,bacon2005quantum,bacon2006efficient,burchardt2025high}.
By using this random dilation superchannel for the unitary superreplication algorithm~\cite{chiribella2015universal}, we show a superreplication algorithm of an unknown quantum channel $\Lambda: \mcL(\mcI) \to \mcL(\mcO)$ with Kraus rank at most $r=d_{\mcI}/d_{\mcO}$ that transforms $n$ parallel queries of $\Lambda$ to $\Theta(n^\alpha)$ parallel queries of $\Lambda$ for any $\alpha<2$.
We also construct a quantum circuit implementing the Petz recovery map~\cite{Petz1986sufficientsubalgebas} of an unknown quantum channel $\Lambda$ with at most Kraus rank $r = d_{\mcI}/d_{\mcO}$ probabilistically and exactly based on the probabilistic unitary inversion algorithm~\cite{quintino2019reversing,quintino2019probabilistic}.
We also extend our results to further higher-order cases from states and channels, namely, the random dilation of quantum superchannels. We construct a quantum circuit implementing the random dilation supersuperchannel that transforms $n$ parallel queries of an unknown quantum superchannel into $n$ parallel queries of a randomly chosen dilation superchannel of the input superchannel.
By using a similar idea to the sample-to-query lifting of Ref.~\cite{tang2025conjugate}, we also show an approximate implementation of the sequential random dilation superchannel by transforming the parallel random dilation isometry into sequential random dilation unitaries with $O(\mathrm{poly}(d_{\mcI}))$ overhead in the number of queries, which is exponential in the number of qubits in the input system.

On the other hand, we also show that the exact random dilation of sequential queries of an unknown quantum channel is \emph{impossible} with subexponential overhead $o(\mathrm{poly}(\min\{d_\mcI, d_\mcO\}))$ in the number of queries,
See Tab.~\ref{tab:summary} for the summary of the results.
This no-go theorem is based on the no-go theorem of simulating the quantum switch by a quantum circuit whose causal order is classically controlled~\cite{bavaresco2025simulating}.
By showing that the random dilation of sequential queries of an unknown quantum channel can be used to simulate the quantum switch~\cite{chiribella2013quantum}, we conclude that the random dilation of sequential queries is impossible.
This result partially disproves the conjecture in Ref.~\cite{tang2025conjugate} that the random dilation of any type of queries possible with polynomial queries, and it also shows a fundamental difference between the parallel and sequential cases of the random dilation.
It also contrasts with the random purification channel, where the parallel and sequential cases are equivalent.

\begin{table}
    \centering
    \begin{ruledtabular}
    \begin{tabular}{c|c|c}
         & Exact & Approximate \\\hline
        Parallel & No overhead~\textbf{[Thm.~\ref{thm:random-dilation-superchannel}]} & --\\\hline
        \multirow{2}{*}{Sequential} & $\Omega(\mathrm{poly}(\min\{d_\mcI, d_\mcO\}))$ & $O(\mathrm{poly}(d_\mcI))$\\
         & \textbf{[Thm.~\ref{thm:arbitrary_dim_no-go}]} & \textbf{[Thm.~\ref{thm:random-dilation-superchannel} + Thm.~\ref{thm:lifting_parallel_to_sequential}]}
    \end{tabular}
    \end{ruledtabular}
    \caption{The required overhead in the number of queries to implement parallel or sequential random dilation superchannels of $\Lambda: \mcL(\mcI) \to \mcL(\mcO)$.
    Theorem~\ref{thm:random-dilation-superchannel} shows that the parallel one does not require any overhead even in an exact case.
    Theorem~\ref{thm:lifting_parallel_to_sequential} shows an approximate conversion from parallel dilation isometries to sequential dilation unitaries, which shows a sequential random dilation superchannel with $O(\mathrm{poly}(d_\mcI))$ overhead combined with Thm~\ref{thm:random-dilation-superchannel}.
    Theorem~\ref{thm:arbitrary_dim_no-go} shows a no-go theorem that any exact sequential random dilation superchannel at least requires $\Omega(\mathrm{poly}(\min\{d_\mcI, d_\mcO\}))$ overhead.}
    \label{tab:summary}
\end{table}

The rest of this paper is organized as follows.
In Sec.~\ref{sec:preliminaries}, we summarize the representation theory and quantum superchannels.
In Sec.~\ref{sec:parallel_random_dilation_superchannel}, we present the quantum circuits implementing the random dilation superchannel.
By constructing a random purification channel based on the quantum Fourier transform over the symmetric group and the quantum Schur transform in Sec.~\ref{subsec:random_purification_channel}, we construct the random dilation superchannel and the random dilation supersuperchannel in Sec.~\ref{subsec:parallel_random_dilation_superchannel}.
We also construct another quantum circuit of the parallel random dilation superchannel based on the Clebsch--Gordan transform over the symmetric group in Sec.~\ref{sec:another_construction}.
We present an approximate transformation of parallel random dilation isometries to sequential random dilation unitaries in Sec.~\ref{subsec:lifting_parallel_to_sequential} to obtain an approximate implementation of the sequential random dilation superchannel.
In Sec.~\ref{sec:sequential_random_dilation}, we show the no-go theorem on the exact random dilation of sequential queries of an unknown quantum channel.
In Sec.~\ref{sec:applications}, we present three applications of the random dilation superchannel: storage and retrieval of quantum channels, superreplication of quantum channels,  and probabilistic implementation of the Petz recovery map of quantum channels.
Finally, we conclude this work in Sec.~\ref{sec:conclusion}.

\section{Preliminaries}
\label{sec:preliminaries}

We summarize the essential notions of representation theory of the unitary group and the symmetric group, and quantum superchannels.
Readers familiar with representation theory can skip Sec.~\ref{subsec:representation_theory}, and readers familiar with quantum superchannels can skip Secs.~\ref{subsec:quantum_superchannels} and \ref{subsec:dilation_of_channels_and_combs}.

\subsection{Representation theory of the unitary group and the symmetric group}
\label{subsec:representation_theory}
We first introduce the Schur--Weyl duality, which is the main technical tool used in this work.
We consider the representation of the unitary group $\U(d)$ and the symmetric group $\mfS_n$ on the $n$-fold tensor product space $(\CC^d)^{\otimes n}$ given by
\begin{align}
  U^{\otimes n} \ket{i_1} \otimes \cdots \otimes \ket{i_n} &= U\ket{i_1} \otimes \cdots \otimes U\ket{i_n},\\
  \pi(\sigma) \ket{i_1} \otimes \cdots \otimes \ket{i_n} &\coloneqq \ket{i_{\sigma^{-1}(1)}} \otimes \cdots \otimes \ket{i_{\sigma^{-1}(n)}}.
\end{align}
These representations can be simultaneously decomposed into irreducible representations (irreps) as
\begin{align}
  \label{eq:schur_weyl_duality}
  (\CC^d)^{\otimes n} &\simeq \bigoplus_{\lambda \vdash n} \mcU_{\lambda} \otimes \mcS_{\lambda},\\
  U^{\otimes n} &\simeq \bigoplus_{\lambda \vdash n} f_{\lambda}(U) \otimes \1_{\mcS_{\lambda}},\\
  \pi(\sigma) &\simeq \bigoplus_{\lambda \vdash n} \1_{\mcU_{\lambda}} \otimes g_{\lambda}(\sigma),
\end{align}
where $\lambda \vdash n$ means that $\lambda$ is a partition of $n$, $f_\lambda: \U(d) \to \mcL(\mcU_{\lambda})$ is the irrep of $\U(d)$ labeled by $\lambda$, $g_\lambda:\mfS_n \to \mcL(\mcS_{\lambda})$ is the irrep of $\mfS_n$ labeled by $\lambda$, and $\1_\mcH$ represents the identity operator on a Hilbert space $\mcH$.
The dimensions of these irrep spaces are denoted by $d_{\lambda} \coloneqq \dim \mcU_{\lambda}$ and $m_{\lambda} \coloneqq \dim \mcS_{\lambda}$.
We define an orthonormal basis of $\mcU_{\lambda}$ and $\mcS_{\lambda}$ as $\{\ket{u}\}_{u\in [d_\lambda]}$ and $\{\ket{i}\}_{i\in [m_\lambda]}$, respectively, and we define the Schur basis of $(\CC^d)^{\otimes n}$ as $\{\ket{\lambda, u, i} \coloneqq \ket{u}\otimes \ket{i}\}_{\lambda\vdash n, u\in [d_\lambda], i \in [m_\lambda]}$.
We in particular take the Gelfand--Tsetlin basis~\cite{goodman2009symmetry} as the orthonormal basis of $\mcU_{\lambda}$, and the Young--Yamanouchi basis~\cite{james2006representation} as the orthonormal basis of $\mcS_{\lambda}$.
The quantum Schur transform $U_\mathrm{Sch}^{(n,d)}$ is the unitary transformation representing the basis change from the computational basis to the Schur basis $\ket{\lambda, u, i}$.
The quantum Schur transform can be efficiently implemented with the circuit complexity $O(\mathrm{poly}(n, \log d))$~\cite{burchardt2025high}.

We then introduce the quantum states and channels used in the construction of the parallel dilation quantum superchannel.
We define the quantum state in the group algebra $\CC[\mfS_n] \coloneqq \mathrm{span}_{\CC}\{\ket{\sigma} \mid \sigma \in \mfS_n\}$ as
\begin{align}
  \label{eq:def_plus}
  \ket{+_{\mfS_n}} \coloneqq {1\over \sqrt{n!}} \sum_{\sigma \in \mfS_n} \ket{\sigma} \in \CC[\mfS_n],
\end{align}
the controlled permutation unitary as
\begin{align}
  \mathrm{ctrl}-\pi \coloneqq \sum_{\sigma \in \mfS_n} \ketbra{\sigma}{\sigma} \otimes \pi(\sigma),
\end{align}
the quantum Fourier transform (QFT) over the symmetric group $\mfS_n$ as
\begin{align}
  \label{eq:def_qft}
  \mathrm{QFT}_{\mfS_n} \colon \CC[\mfS_n] &\to \bigoplus_{\lambda \vdash n} \mcS_{\lambda} \otimes \mcS_{\lambda},\\
  \ket{\sigma} &\mapsto \sum_{\lambda \vdash n} \sum_{i, j=1}^{m_\lambda} \sqrt{m_\lambda \over n!} [g_{\lambda}(\sigma)]_{ji} \ket{\lambda, i, j},
\end{align}
and the maximally mixed state in $\mcU_\lambda$ by
\begin{align}
    \label{eq:def_pi_u}
    \pi_{\mcU_\lambda}\coloneqq {\1_{\mcU_\lambda}\over d_\lambda}.
\end{align}
The unitarity of $\mathrm{QFT}_{\mfS_n}$ is guaranteed by the Schur orthogonality relation
\begin{align}
\label{eq:schur_orthogonality}
  {1\over n!} \sum_{\sigma \in \mfS_n} [g_{\mu}(\sigma)]_{ji} [\overline{g_{\lambda}(\sigma)}]_{j'i'} = {1\over m_\lambda} \delta_{\lambda\mu}\delta_{j j'} \delta_{i i'},
\end{align}
where $\delta_{ii'}$ is the Kronecker delta defined by $\delta_{ii}=1$ and $\delta_{ii'}=0$ for $i\neq i'$.
The QFT over the symmetric group can be efficiently implemented with the circuit complexity $O(\mathrm{poly}(n))$~\cite{beals1997quantum,kawano2016quantum}.
Since
\begin{align}
    \mathrm{QFT}_{\mfS_n}^\dagger \ket{(n),1,1} =  \ket{+_{\mfS_n}}
\end{align}
holds, where $(n)$ corresponds to the trivial representation $g_{(n)}(\sigma)\coloneqq 1$ for all $\sigma\in\mfS_n$, the quantum state $\ket{+_{\mfS_n}}$ can also be efficiently implemented.

\subsection{Quantum superchannels}
\label{subsec:quantum_superchannels}

A quantum superchannel is a linear map that transforms quantum channels to quantum channels.
It is given as a linear map
\begin{align}
  \mcC: \bigotimes_{i=1}^{k-1} [\mcL(\mcO_i) \to \mcL(\mcI_{i+1})] \to [\mcL(\mcI_1) \to \mcL(\mcO_k)],
\end{align}
which transforms quantum channels $\Phi_i: \mcL(\mcO_i) \to \mcL(\mcI_{i+1})$ for $i\in [k-1]$ into a quantum channel $\Phi_\mathrm{out}: \mcL(\mcI_1) \to \mcL(\mcO_k)$, i.e., $\Phi_\mathrm{out} = \mcC(\Phi_1\otimes \cdots \otimes \Phi_{k-1})$.
Here, $\mcL(\mcH)$ represents the set of linear maps on a Hilbert space $\mcH$, and $[\mcL(\mcH_1) \to \mcL(\mcH_2)]$ represents the set of linear maps from $\mcL(\mcH_1)$ to $\mcL(\mcH_2)$.

We use the Choi--Jamio\l{}kowski isomorphism~\cite{CHOI1975285, JAMIOLKOWSKI1972275} to represent the quantum channels and superchannels.
A quantum channel $\Phi: \mcL(\mcI) \to \mcL(\mcO)$ is represented as its Choi matrix defined by
\begin{align}
  J_\Phi \coloneqq \sum_{i,j=1}^{d_{\mcI}} \ketbra{i}{j}_{\mcI} \otimes \Phi(\ketbra{i}{j})_{\mcO} \in \mcL(\mcI \otimes \mcO),
\end{align}
using the computational basis $\{\ket{i}\}_{i=1}^{d_{\mcI}}$ of $\mcI$, where $d_{\mcI}$ is the dimension of $\mcI$.
The Choi matrix of an isometry channel $\mcV(\cdot)\coloneqq V(\cdot)V^\dagger$ for an isometry operator $V: \mcI \to \mcO$ is given by
\begin{align}
  J_{\mcV} = \dketbra{V},
\end{align}
where $\dket{V}$ is the Choi vector of $V$ defined by $\dket{V} \coloneqq \sum_{i=1}^{d_{\mcI}} \ket{i}_{\mcI} \otimes V\ket{i}_{\mcO}$.
The composition of quantum channels $\Phi_1: \mcL(\mcX) \to \mcL(\mcY)$ and $\Phi_2: \mcL(\mcY) \to \mcL(\mcZ)$ is represented by the link product~\cite{chiribella2008quantum} of their Choi matrices defined by
\begin{align}
  J_{\Phi_2 \circ \Phi_1}
  &= J_{\Phi_2} \ast J_{\Phi_1}\\
  &\coloneqq \Tr_{\mcY}\left[
    (\1_{\mcX} \otimes J_{\Phi_2}) (J_{\Phi_1}^{\mathsf{T}_{\mcY}} \otimes \1_{\mcZ})\right],
\end{align}
where $\ast$ denotes the link product, and $(\cdot)^{\mathsf{T}_{\mcY}}$ denotes the partial transpose with respect to the subsystem $\mcY$.
The Choi matrix of a quantum superchannel $\mcC$ is defined by $J_{\mcC} \in \mcL(\bigotimes_{i=1}^{k} (\mcI_i \otimes \mcO_i))$ such that
\begin{align}
  J_{\mcC}\ast (J_{\Phi_1} \otimes \cdots \otimes J_{\Phi_{k-1}}) = J_{\Phi_\mathrm{out}}
\end{align}
for all linear maps $\Phi_i: \mcL(\mcO_i) \to \mcL(\mcI_{i+1})$ for $i\in [k-1]$ and $\Phi_\mathrm{out} \coloneqq \mcC(\Phi_1\otimes \cdots \otimes \Phi_{k-1})$.

The quantum comb is a quantum superchannel implementable on the quantum circuit model with a fixed causal order~\cite{chiribella2008quantum}, which is given by a sequential connection of quantum channels $\Lambda_i: \mcL(\mcI_i \otimes \mcA_{i-1}) \to \mcL(\mcO_i \otimes \mcA_i)$ for $i\in [k]$ with intermediate memory systems $\mcA_i$ satisfying $\mcA_0\simeq \mcA_k\simeq \CC$, namely,
\begin{align}
  &\mcC(\Phi_1\otimes \cdots \otimes \Phi_{k-1})\nonumber\\
  &= \Lambda_k \circ [\Phi_{k-1} \otimes \1_{\mcA_{k-1}}] \circ \Lambda_{k-1} \circ \cdots \circ [\Phi_1\otimes \1_{\mcA_1}] \circ \Lambda_1.
\end{align}
The Choi matrix of a quantum comb $\mcC$ is given by $J_{\mcC} = J_{\Lambda_1} \ast \cdots \ast J_{\Lambda_k}$.

In the quantum circuit model, we can consider a more general class of quantum superchannels, called the quantum circuit with classical control of causal order, or a QC-CC superchannel~\cite{wechs2021quantum}.
It allows the causal order of the queries to be classically controlled by the measurement outcomes of the intermediate memory systems.
It is believed to be the most general class of superchannels that can be implemented by quantum circuits with a fixed number of queries to input channels.
It is given by a sequential connection of quantum instruments\footnote{The quantum instrument $\{\Lambda_{a_1\ldots a_{i-1}}^{(a_i)}\}_{a_i}$ is a set of linear maps such that $\Lambda_{a_1\ldots a_{i-1}}^{(a_i)}$ is completely positive (CP) for all $a_i$ and $\sum_{a_i} \Lambda_{a_1\ldots a_{i-1}}^{(a_i)}$ is trace preserving (TP).} $\{\Lambda_{a_1\ldots a_{i-1}}^{(a_i)}: \mcL(\mcI_i \otimes \mcA_{i-1}) \to \mcL(\mcO_i \otimes \mcA_i)\}_{a_i}$ for $i\in [k]$, where $a_i$ represents the measurement outcome of $i$-th quantum instrument.
The measurement outcome $a_i$ determines the quantum channel to be used in the $i$-th slot, namely,
\begin{align}
  &\mcC(\Phi_1\otimes \cdots \otimes \Phi_{k-1})\nonumber\\
  &= \sum_{a_1, \ldots, a_{k-1}}\Lambda_{a_1\ldots a_{k-1}} \circ [\Phi_{a_{k-1}} \otimes \1_{\mcA_{k-1}}] \circ \Lambda^{(a_{k-1})}_{a_1\ldots a_{k-2}} \circ \nonumber\\
  &\hspace{60pt}\cdots \circ [\Phi_{a_1}\otimes \1_{\mcA_1}] \circ \Lambda^{(a_1)}_{\varnothing}
\end{align}
for all linear maps $\Phi_i: \mcL(\mcO_i) \to \mcL(\mcI_{i+1})$ for $i\in [k-1]$ satisfying $\mcI_1\simeq \cdots \simeq \mcI_{k-1}$ and $\mcO_1\simeq \cdots \simeq \mcO_{k-1}$, where $\varnothing$ denotes the empty set.
Note that the last quantum channel $\Lambda_{a_1\ldots a_{k-1}}$ does not have the measurement outcome since it is not followed by any query.

\subsection{Dilation of quantum channels and quantum combs}
\label{subsec:dilation_of_channels_and_combs}

Any quantum state $\rho\in \mcL(\mcH)$ has its purification $\ket{\psi}\in \mcE \otimes \mcH$ such that $\rho = \Tr_{\mcE}[\ketbra{\psi}]$.
If the rank of $\rho$ is $r$, the dimension of the environment $\mcE$ can be taken as $r$, i.e., $\mcE\simeq \CC^r$.
Purification of $\rho$ is not unique, and we define the set of purifications of $\rho$ as
\begin{align}
  \mathrm{Pur}_r(\rho) \coloneqq \{\ket{\psi} \in \mcE \otimes \mcH \mid \rho = \Tr_{\mcE}[\ketbra{\psi}]\}.
\end{align}
Once a purification $\ket{\psi_0}$ of $\rho$ is fixed, the set of purifications of $\rho$ can be represented as
\begin{align}
  \mathrm{Pur}_r(\rho) = \{(U \otimes \1_{\mcH})\ket{\psi_0} \mid U \in \U(r)\},
\end{align}
and we define the uniform distribution on $\mathrm{Pur}_r(\rho)$ as the distribution induced by the Haar measure on $\U(r)$.

Similarly to the purification of quantum states, any quantum channel $\Lambda: \mcL(\mcI) \to \mcL(\mcO)$ has its dilation given by an isometry $V: \mcI \to \mcE \otimes \mcO$ such that $\Lambda(\cdot) = \Tr_{\mcE}[V (\cdot) V^\dagger]$.
If the Kraus rank of $\Lambda$, defined by the rank of $J_\Lambda$, is $r$, the dimension of the environment $\mcE$ can be taken as $r$, i.e., $\mcE\simeq \CC^r$.
The set of dilations of $\Lambda$ is defined as
\begin{align}
  \mathrm{Dil}_r(\Lambda) \coloneqq \{V: \mcI \to \mcE \otimes \mcO \mid \Lambda(\cdot) = \Tr_{\mcE}[V (\cdot) V^\dagger]\}.
\end{align}
The sets of purifications and dilations are related by the Choi--Jamio\l{}kowski isomorphism as
\begin{align}
  \label{eq:dilation_purification_relation}
  \mathrm{Dil}_r(\Lambda) = \{\mcV \mid \dket{V}\in \mathrm{Pur}_r(J_\Lambda)\}.
\end{align}
We define the uniform distribution on $\mathrm{Dil}_r(\Lambda)$ as the distribution induced by the uniform distribution on $\mathrm{Pur}_r(J_\Lambda)$.

Any quantum superchannel $\mcC$ has its dilation given by a pure quantum comb~\cite{araujo2017purification,yokojima2021consequences} $\mcC_\mathrm{pur}$ such that $\mcC(\cdot) = \Tr_{\mcE}[\mcC_\mathrm{pur}(\cdot)]$.
A pure quantum comb is defined as a quantum comb composed of isometry operators $V_i: \mcI_i \otimes \mcB_{i-1} \to \mcO_i \otimes \mcB_i$ for $i\in [k]$ with intermediate memory systems $\mcB_i$ satisfying $\mcB_0\simeq \CC$ and $\mcB_k = \mcE$, namely,
\begin{align}
  &\mcC_\mathrm{pur}(\Phi_1\otimes \cdots \otimes \Phi_{k-1})\nonumber\\
  &= \mcV_k \circ [\Phi_{k-1} \otimes \1_{\mcB_{k-1}}] \circ \mcV_{k-1} \circ \cdots \circ [\Phi_1\otimes \1_{\mcB_1}] \circ \mcV_1.
\end{align}
If the rank of $J_\mcC$ is $r$, the dimension of the environment $\mcE$ can be taken as $r$, and we can define the set of dilations of $\mcC$ as
\begin{align}
  \mathrm{Dil}_r(\mcC) \coloneqq \{\mcC_\mathrm{pur}: \text{pure} \mid \mcC(\cdot) = \Tr_{\mcE}[\mcC_\mathrm{pur}(\cdot)]\}.
\end{align}
The set of dilations of a quantum superchannel $\mcC$ is related to the set of purifications of $J_\mcC$ as
\begin{align}
  \mathrm{Dil}_r(\mcC) = \{\mcC_\mathrm{pur} \mid J_{\mcC_\mathrm{pur}} = \ketbra{\psi} \text{ for }\ket{\psi} \in \mathrm{Pur}_r(J_\mcC)\}.
\end{align}
We define the uniform distribution on $\mathrm{Dil}_r(\mcC)$ as the distribution induced by the uniform distribution on $\mathrm{Pur}_r(J_\mcC)$.

\subsection{No-go theorem on simulating the quantum switch by a QC-CC superchannel}
\label{subsec:switch_simulation}

There exists a quantum superchannel that cannot be implemented in the quantum circuit model, such as the quantum switch~\cite{chiribella2013quantum}, which transforms two quantum channels $\Lambda$ and $\Phi$ into a quantum channel that applies $\Lambda$ and $\Phi$ in a superposition of two different orders.
It is defined as a $2$-slot quantum superchannel $\mcS_\mathrm{switch}$ given by
\begin{align}
  \mcS_\mathrm{switch}(\Lambda \otimes \Phi)(\cdot) \coloneqq \sum_{i,j} S_{ij} (\cdot) S_{ij}^\dagger,\label{eq:def_quantum_switch}\\
  S_{ij} \coloneqq \ketbra{0}{0}\otimes K_i L_j  + \ketbra{1}{1}\otimes  L_j K_i,
\end{align}
where  $K_i$ and $L_j$ are the Kraus operators of $\Lambda$ and $\Phi$ respectively, i.e., $\Lambda(\cdot) = \sum_i K_i (\cdot) K_i^\dagger$ and $\Phi(\cdot) = \sum_j L_j (\cdot) L_j^\dagger$.

The quantum switch cannot be implemented by a QC-CC superchannel if we just have single queries of $\Lambda$ and $\Phi$~\cite{wechs2021quantum}.
In addition, even if we have $n$ queries of $\Lambda$ and a single query of $\Phi$, the implementation of the quantum channel $\mcS_\mathrm{switch}(\Lambda\otimes \Phi)$ by a QC-CC superchannel requires exponentially many $n$, as shown in Ref.~\cite{bavaresco2025simulating}.
\begin{theorem}[\cite{bavaresco2025simulating}]
  \label{thm:quantum_switch_no-go}
  There is no QC-CC superchannel $\mcC$ such that $\mcC(\Lambda^{\otimes n} \otimes \Phi) = \mcS_\mathrm{switch}(\Lambda \otimes \Phi)$ for all $N$-qubit quantum channels $\Lambda, \Phi$ if
  \begin{align}
    n\leq \max(2, 2^N-1).
  \end{align}
\end{theorem}

\section{Main result 1: Parallel random dilation superchannel}
\label{sec:parallel_random_dilation_superchannel}

\subsection{Random purification channel}
\label{subsec:random_purification_channel}

\begin{figure}
  \centering
  \includegraphics[width=\linewidth]{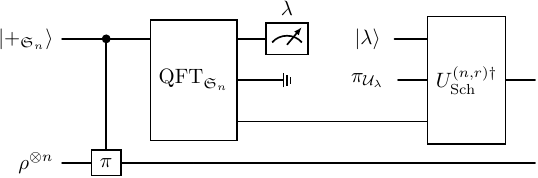}
  \caption{Quantum circuit implementing the random purification channel $\Phi$ that transforms $n$ parallel queries of an unknown quantum state $\rho \in \mcL(\CC^d)$ into $n$ parallel queries of a randomly chosen purification of $\rho$.
  See Eqs.~\eqref{eq:schur_weyl_duality}--\eqref{eq:def_pi_u} for the definitions of the quantum Schur transform $U_\mathrm{Sch}$, the controlled permutation unitary $\mathrm{ctrl}-\pi$, the quantum Fourier transform over the symmetric group $\mathrm{QFT}_{\mfS_n}$, the quantum state $\ket{+_{\mfS_n}}$, and the maximally mixed state $\pi_{\mcU_{\lambda}}$.}
  \label{fig:random-purification-channel}
\end{figure}

We first present a quantum circuit implementing the random purification channel based on the quantum Fourier transform over the symmetric group and the quantum Schur transform, as shown in Fig.~\ref{fig:random-purification-channel}.
This construction is different from the original construction in Ref.~\cite{tang2025conjugate}, and it is more suitable for our later construction of the random dilation superchannel.

\begin{theorem}
  \label{thm:random-purification-channel}
  The quantum circuit shown in Fig.~\ref{fig:random-purification-channel} implements the random purification channel $\Phi$ that transforms $n$ parallel queries of an unknown quantum state $\rho \in \mcL(\CC^d)$ with rank at most $r$ into $n$ parallel queries of a randomly chosen purification of $\rho$, i.e.,
  \begin{align}
    \Phi\left(\rho^{\otimes n}\right)
    =
    \mathbb{E}_{\ket{\psi} \sim \mathrm{Pur}_r(\rho)}
    \left[
      \ketbra{\psi}^{\otimes n}
    \right],
  \end{align}
  where the expectation is taken over the uniform distribution on $\mathrm{Pur}_r(\rho)$.
  The circuit complexity of implementing the quantum channel $\Phi$ is $O(\mathrm{poly}(n, \log d))$.
\end{theorem}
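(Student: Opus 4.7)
My plan is to compute both sides of the claimed identity in the combined Schur basis of the reference $(\CC^r)^{\otimes n}$ and the data $(\CC^d)^{\otimes n}$, then match block-by-block over the Schur--Weyl decomposition.

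\textbf{Target.} Since $\rho^{\otimes n}$ commutes with every $\pi(\sigma)$, Schur--Weyl duality gives $U_\mathrm{Sch}^{(n,d)} \rho^{\otimes n} (U_\mathrm{Sch}^{(n,d)})^\dagger = \bigoplus_\lambda \tilde\rho_\lambda \otimes \1_{\mcS_\lambda}$ for some $\tilde\rho_\lambda$ on the $\U(d)$-irrep space, with $\sum_\lambda m_\lambda \Tr[\tilde\rho_\lambda]=1$. Because the Haar average does not depend on the choice of $\ket{\rho_0}$, I take the spectral purification $\ket{\rho_0}=\sum_k\sqrt{p_k}\ket{k}_R\ket{k}_D$; then $\ket{\rho_0}^{\otimes n}$ is invariant under the diagonal $\mfS_n$-action, so in the doubly-Schur basis it lies in the subspace $\bigoplus_\lambda \mcU_\lambda^{(r)} \otimes \mcU_\lambda^{(d)} \otimes \CC\ket{\Phi^+_{\mcS_\lambda}}$ and takes the canonical form $\bigoplus_\lambda \sqrt{m_\lambda}\ket{\chi_\lambda}\otimes\ket{\Phi^+_{\mcS_\lambda}}$, where $\ket{\chi_\lambda}$ purifies $\tilde\rho_\lambda$ across $\mcU_\lambda^{(r)}\otimes\mcU_\lambda^{(d)}$ and $\ket{\Phi^+_{\mcS_\lambda}} = m_\lambda^{-1/2}\sum_i\ket{i,i}$. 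Haar-averaging $(U^{\otimes n}\otimes\1_d)$ over $\U(r)$ acts as the irrep $f_\lambda(U)$ on each $\mcU_\lambda^{(r)}$ factor, so Schur's lemma collapses $\ketbra{\chi_\lambda}$ to $\pi_{\mcU_\lambda}\otimes \tilde\rho_\lambda$, giving the target
\[
  \bigoplus_\lambda m_\lambda\,\pi_{\mcU_\lambda}\otimes\tilde\rho_\lambda\otimes\ketbra{\Phi^+_{\mcS_\lambda}}.
\]

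\textbf{Circuit output.} Applying $\mathrm{ctrl}-\pi$ to $\ket{+_{\mfS_n}}\bra{+_{\mfS_n}}\otimes\rho^{\otimes n}$ gives, after using $[\pi(\sigma),\rho^{\otimes n}]=0$ and the substitution $\tau=\sigma\sigma'^{-1}$, the state $\tfrac{1}{n!}\sum_\tau L(\tau)\otimes\pi(\tau)\rho^{\otimes n}$ on $\CC[\mfS_n]\otimes(\CC^d)^{\otimes n}$, where $L$ is the left regular representation of $\mfS_n$. The $\mathrm{QFT}_{\mfS_n}$ diagonalizes $L(\tau)$ as $\bigoplus_\lambda \1_{\mcS_\lambda^{\mathrm{mult}}}\otimes g_\lambda(\tau)$ on the Peter--Weyl decomposition $\CC[\mfS_n]\simeq \bigoplus_\lambda \mcS_\lambda^{\mathrm{mult}}\otimes\mcS_\lambda^{\mathrm{rep}}$, while $U_\mathrm{Sch}^{(n,d)}$ on the data sends $\pi(\tau)\rho^{\otimes n}$ to $\bigoplus_{\lambda'}\tilde\rho_{\lambda'}\otimes g_{\lambda'}(\tau)$. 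The Schur orthogonality $\tfrac{1}{n!}\sum_\tau g_\lambda(\tau)\otimes g_{\lambda'}(\tau)=\delta_{\lambda\lambda'}\ketbra{\Phi^+_{\mcS_\lambda}}$ (valid because the Young--Yamanouchi basis is real) then kills all $\lambda\neq\lambda'$ cross-terms and reduces the diagonal blocks to $\bigoplus_\lambda \1_{\mcS_\lambda^{\mathrm{mult}}}\otimes\tilde\rho_\lambda\otimes\ketbra{\Phi^+_{\mcS_\lambda}}$, that is, a maximally entangled state shared between the ancilla's representation register and the data's $\mcS_\lambda$ register, with $\tilde\rho_\lambda$ on $\mcU_\lambda^{(d)}$ and an identity spectator on the ancilla's multiplicity register. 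The remaining circuit elements trace out the multiplicity register (giving the factor $m_\lambda=\Tr[\1_{\mcS_\lambda^{\mathrm{mult}}}]$), coherently inject $\pi_{\mcU_\lambda}$ on a fresh register (playing the role of the $\mcU_\lambda^{(r)}$ factor of the reference), and apply $(U_\mathrm{Sch}^{(n,r)})^\dagger$ to re-assemble the reference into $(\CC^r)^{\otimes n}$, yielding the target exactly.

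\textbf{Complexity and main obstacle.} The $O(\mathrm{poly}(n,\log d))$ complexity follows from the known efficient implementations of $U_\mathrm{Sch}^{(n,d)}$~\cite{burchardt2025high} and $\mathrm{QFT}_{\mfS_n}$~\cite{beals1997quantum,kawano2016quantum}, the preparation of $\ket{+_{\mfS_n}}$ (e.g., as $\mathrm{QFT}_{\mfS_n}^\dagger$ applied to a trivial-rep basis state), and the efficient preparation of $\pi_{\mcU_\lambda}$ by classical randomness conditioned on the $\lambda$-register. The most delicate step is reconciling the Young--Yamanouchi conventions used by the QFT on the ancilla with those used by the Schur transform on the data, so that the state $\ketbra{\Phi^+_{\mcS_\lambda}}$ created between them is literally the standard maximally entangled state appearing in the target's canonical form, and verifying that the $\lambda$-conditional ``discard multiplicity ($\dim m_\lambda$), inject $\pi_{\mcU_\lambda}$ ($\dim d_\lambda$)'' step correctly realizes the reference-side Schur isomorphism across registers of unequal dimension.
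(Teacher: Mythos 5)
Your proof is correct and follows essentially the same route as the paper's: expand the circuit output in the Schur and Fourier bases, collapse the randomized permutation via Schur orthogonality together with the reality of the Young--Yamanouchi basis, and match the resulting block form $\bigoplus_{\lambda} m_\lambda\, \pi_{\mcU_\lambda}\otimes f_\lambda(\rho)\otimes \ketbra{\Phi^+_{\mcS_\lambda}}$ (with the final trace-out of the multiplicity register, injection of $\pi_{\mcU_\lambda}$, and application of $U_\mathrm{Sch}^{(n,r)\dagger}$) against the Haar-averaged purification. The only notable difference is that you derive the target formula yourself via the spectral purification, diagonal $\mfS_n$-invariance, and a $\U(r)$ twirl, whereas the paper imports it from Lem.~2.16 of Ref.~\cite{tang2025conjugate}.
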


\begin{proof}
  As shown in Lem. 2.16 of Ref.~\cite{tang2025conjugate}, the random purification is given by
  \begin{align}
    &\mathbb{E}_{\ket{\psi} \sim \mathrm{Pur}_r(\rho)}
    \left[
      \ketbra{\psi}^{\otimes n}
    \right]\nonumber\\
    &= \sum_{\lambda \vdash n} \sum_{i, i'=1}^{m_\lambda} U_\mathrm{Sch}^{(n,r)\dagger} (\ketbra{\lambda} \otimes \pi_{\mcU_{\lambda}}\otimes \ketbra{i}{i'}) U_\mathrm{Sch}^{(n,r)} \nonumber\\
    &\quad \otimes U_\mathrm{Sch}^{(n,d)\dagger} \left[\ketbra{\lambda} \otimes f_{\lambda}(\rho) \otimes \ketbra{i}{i'}\right] U_\mathrm{Sch}^{(n,d)}.
  \end{align}
The quantum state just after the QFT in the quantum circuit shown in Fig.~\ref{fig:random-purification-channel} is given by
\begin{align}
  &{1\over n!}\sum_{\sigma, \tau\in \mfS_n} \mathrm{QFT}_{\mfS_n} \ketbra{\sigma}{\tau} \mathrm{QFT}_{\mfS_n}^\dagger \otimes \pi(\sigma)\rho^{\otimes n} \pi(\tau)^\dagger \nonumber\\
  &= {1\over n!}\sum_{\sigma, \tau\in \mfS_n}\sum_{\lambda, \lambda'\vdash n}\sum_{i,j=1}^{m_\lambda}\sum_{i', j'=1}^{m_{\lambda'}} {\sqrt{m_\lambda m_{\lambda'}} \over n!} [g_{\lambda}(\sigma)]_{ji} [\overline{g_{\lambda'}(\tau)}]_{j'i'} \nonumber\\
  &\quad \times \ketbra{\lambda, i, j}{\lambda', i', j'} \nonumber\\
  &\quad \otimes U_\mathrm{Sch}^{(n,d)\dagger}\left[\sum_{\mu\vdash n} \ketbra{\mu} \otimes f_\mu(\rho) \otimes g_{\mu}(\sigma) g_{\mu}(\tau)^\dagger\right]U_\mathrm{Sch}^{(n,d)}.
\end{align}
Thus, the output quantum state is given by
\begin{align}
  &{1\over n!}\sum_{\sigma, \tau\in \mfS_n}\sum_{\lambda\vdash n}\sum_{i, j, j'=1}^{m_\lambda} {m_{\lambda} \over n!} [g_{\lambda}(\sigma)]_{ji} [\overline{g_{\lambda}(\tau)}]_{j'i} \nonumber\\
  &\quad \times U_\mathrm{Sch}^{(n,r)\dagger} (\ketbra{\lambda} \otimes \pi_{\mcU_\lambda} \otimes \ketbra{j}{j'}) U_\mathrm{Sch}^{(n,r)} \nonumber\\
  &\quad \otimes U_\mathrm{Sch}^{(n,d)\dagger}\left[\sum_{\mu\vdash n} \ketbra{\mu} \otimes f_\mu(\rho) \otimes g_{\mu}(\sigma) g_{\mu}(\tau)^\dagger\right]U_\mathrm{Sch}^{(n,d)} \nonumber\\
  &= {1\over n!}\sum_{\sigma, \tau\in \mfS_n}\sum_{\lambda, \mu\vdash n}\sum_{i, j, j'=1}^{m_\lambda} \sum_{k,l,k,l'=1}^{m_\mu}\nonumber\\
  &\qquad {m_{\lambda} \over n!} [g_{\lambda}(\sigma)]_{ji} [\overline{g_{\lambda}(\tau)}]_{j'i}[g_{\mu}(\sigma)]_{kl} [\overline{g_{\mu}(\tau)}]_{k'l'} \nonumber \\
  &\quad \times U_\mathrm{Sch}^{(n,r)\dagger} (\ketbra{\lambda} \otimes \pi_{\mcU_\lambda} \otimes \ketbra{j}{j'}) U_\mathrm{Sch}^{(n,r)} \nonumber\\
  &\quad \otimes U_\mathrm{Sch}^{(n,d)\dagger}\left[\ketbra{\mu} \otimes f_\mu(\rho) \otimes \ketbra{k}{l}\cdot\ketbra{l'}{k'} \right]U_\mathrm{Sch}^{(n,d)}\\
  &= \sum_{\lambda\vdash n}\sum_{i, j, j'=1}^{m_\lambda} {1\over m_\lambda} U_\mathrm{Sch}^{(n,r)\dagger} (\ketbra{\lambda} \otimes \pi_{\mcU_\lambda} \otimes \ketbra{j}{j'}) U_\mathrm{Sch}^{(n,r)} \nonumber\\
  &\quad \otimes U_\mathrm{Sch}^{(n,d)\dagger}\left[\ketbra{\lambda} \otimes  f_\lambda(\rho) \otimes \ketbra{j}{i} \cdot \ketbra{i}{j'}\right]U_\mathrm{Sch}^{(n,d)}\\
  &= \sum_{\lambda\vdash n}\sum_{j, j'=1}^{m_\lambda} U_\mathrm{Sch}^{(n,r)\dagger} (\ketbra{\lambda} \otimes \pi_{\mcU_\lambda} \otimes \ketbra{j}{j'}) U_\mathrm{Sch}^{(n,r)} \nonumber\\
  &\quad \otimes U_\mathrm{Sch}^{(n,d)\dagger}\left[\ketbra{\lambda} \otimes  f_\lambda(\rho) \otimes \ketbra{j}{j'}\right]U_\mathrm{Sch}^{(n,d)}\\
  &= \mathbb{E}_{\ket{\psi} \sim \mathrm{Pur}_r(\rho)}
    \left[
      \ketbra{\psi}^{\otimes n}
    \right],
\end{align}
where we use the Schur orthogonality relation~\eqref{eq:schur_orthogonality} and the reality of $g_{\lambda}(\sigma)$ in the Young--Yamanouchi basis~\cite{james2006representation} in the second equality.
\end{proof}

Note that if we trace out the environment system $\mcE$, we obtain the generalized phase estimation~\cite{harrow2005applications} on the input state $\ket{\psi_{\mathrm{in}}}$, which is to output the measurement outcome $\lambda\vdash n$ with the post measurement state given by $\Pi_\lambda \ket{\psi_{\mathrm{in}}}\bra{\psi_{\mathrm{in}}} \Pi_\lambda / \Tr(\Pi_\lambda \ketbra{\psi_{\mathrm{in}}})$, where $\Pi_\lambda$ is the Young projector defined by $\Pi_\lambda\coloneqq U_\mathrm{Sch}^{\dagger (n,d)} (\1_{\mcU_\lambda} \otimes \1_{\mcS_\lambda}) U_\mathrm{Sch}^{(n,d)}$.

\subsection{Parallel random dilation superchannel and supersuperchannel}
\label{subsec:parallel_random_dilation_superchannel}

By using the random purification channel shown in Fig.~\ref{fig:random-purification-channel}, we present the main result of this work, which is to construct the quantum circuit implementing the random dilation of an unknown quantum channel.

\begin{figure*}
  \centering
  \includegraphics{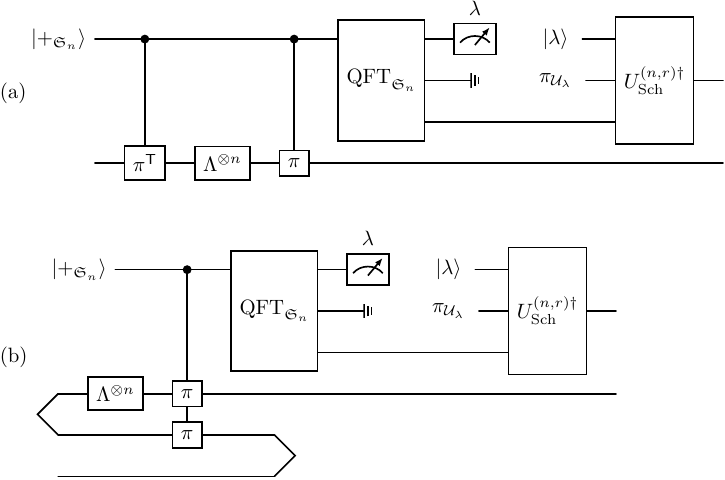}
  \caption{(a) Quantum circuit implementing the random dilation superchannel $\Xi$ that transforms $n$ parallel queries of an unknown quantum channel $\Lambda:\mcL(\mcI) \to \mcL(\mcO)$ into $n$ parallel queries of a randomly chosen dilation isometry of $\Lambda$.
  (b) A proof idea of implementing the random dilation superchannel $\Xi$ using the closed timelike curve based on postselected teleportation (P-CTC)~\cite{lloyd2011quantum}.
  The symbol ``$<$'' represents the unnormalized maximally entangled state $\dket{\1}$, and the symbol ``$>$'' represents the P-CTC.
  By replacing the controlled permutation channel $\mathrm{ctrl}-\pi$ sandwiched by $\dket{\1}$ and the P-CTC with the transposed controlled permutation channel $\mathrm{ctrl}-\pi^{\mathsf{T}}$, we obtain the quantum circuit in (a).}
  \label{fig:random-dilation-superchannel}
\end{figure*}

\begin{theorem}
  \label{thm:random-dilation-superchannel}
  The quantum circuit shown in Fig.~\ref{fig:random-dilation-superchannel}~(a) implements the random dilation superchannel $\Xi$ that transforms $n$ parallel queries of an unknown quantum channel $\Lambda:\mcL(\mcI) \to \mcL(\mcO)$ with at most Kraus rank $r$ into $n$ parallel queries of a randomly chosen dilation isometry of $\Lambda$, i.e.,
  \begin{align}
    \Xi\left(\Lambda^{\otimes n}\right)
    =
    \mathbb{E}_{V\sim \mathrm{Dil}_r(\Lambda)}
    \left[
      \mathcal{V}^{\otimes n}
    \right],
  \end{align}
  where the expectation is taken over the uniform distribution on $\mathrm{Dil}_r(\Lambda)$.
  The circuit complexity of implementing the quantum superchannel $\Xi$ is $O(\mathrm{poly}(n, \log d_{\mcI}, \log d_{\mcO}))$, where $d_{\mcO} = \dim(\mcO)$.
\end{theorem}
\begin{proof}
  We show the proof based on the one-to-one correspondence between the purification of the Choi matrix $J_\Lambda$ and the dilation of $\Lambda$ shown in Eq.~\eqref{eq:dilation_purification_relation}.
  If we allow an unnormalized maximally entangled state $\dket{\1}$ and the closed timelike curve based on postselected teleportation (P-CTC)~\cite{lloyd2011quantum} in the quantum circuit, the random dilation isometry can be implemented in the following way\footnote{Note that this strategy does not physically implement store-and-retrieve of quantum channels. It illustrates the proof idea of combining the ``transpose trick'' and the Choi–Jamiołkowski isomorphism between the set of dilation isometries $\mathrm{Dil}_r(\Lambda)$ and the set of purifications $\mathrm{Pur}_r(J_{\Lambda})$.}:
  \begin{enumerate}
    \item Store the action of $\Lambda$ into the Choi state $J_{\Lambda}$.
    \item Apply the random purification channel on $J_{\Lambda}^{\otimes n}$ to obtain the Choi state of $\mathbb{E}_{V\sim \mathrm{Dil}_r(\Lambda)}[\mcV^{\otimes n}]$.
    \item Retrieve the action of the quantum channel $\mathbb{E}_{V\sim \mathrm{Dil}_r(\Lambda)}[\mcV^{\otimes n}]$ from its Choi state by using the P-CTC.
  \end{enumerate}
  The above strategy can be described as Fig.~\ref{fig:random-dilation-superchannel}~(b).
  The controlled permutation channel $\mathrm{ctrl}-\pi$ sandwiched by the unnormalized maximally entangled state $\dket{\1}$ and the P-CTC can be rewritten as
  \begin{align}
    \mathrm{ctrl}-\pi^{\mathsf{T}} \coloneqq \sum_{\sigma \in \mfS_n} \ketbra{\sigma}{\sigma} \otimes \pi(\sigma)^{\mathsf{T}},
  \end{align}
  as shown in Fig.~\ref{fig:random-dilation-superchannel}~(a).
\end{proof}

The above strategy can be straightforwardly extended to the case of quantum superchannels (random dilation supersuperchannel).
We can construct the random dilation supersuperchannel $\Sigma$ that transforms $n$ parallel queries of an unknown quantum superchannel $\mcC$ into $n$ parallel queries of a randomly chosen dilation superchannel of $\mcC$ as follows.

\begin{figure*}
  \centering
  \includegraphics[width=\linewidth]{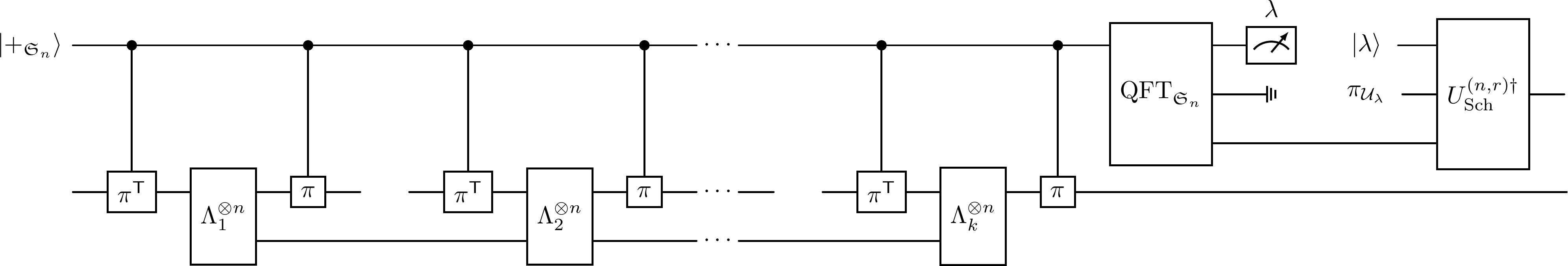}
  \caption{Quantum circuit implementing the random dilation supersuperchannel $\Sigma$ that transforms $n$ parallel queries of an unknown quantum superchannel $\mcC$ into $n$ parallel queries of a randomly chosen dilation superchannel of $\mcC$.}
  \label{fig:superchannel_random_purification}
\end{figure*}

\begin{theorem}
  \label{thm:random-purification-supersuperchannel}
  The quantum circuit shown in Fig.~\ref{fig:superchannel_random_purification} implements the random dilation supersuperchannel $\Sigma$ that transforms $n$ parallel queries of an unknown quantum superchannel $\mcC$ with at most rank $r$ into $n$ parallel queries of a randomly chosen dilation superchannel of $\mcC$, i.e.,
  \begin{align}
    \Sigma\left(\mcC^{\otimes n}\right)
    =
    \mathbb{E}_{\mcC_\mathrm{pur} \sim \mathrm{Dil}_r(\mcC)}
    \left[
      \mcC_\mathrm{pur}^{\otimes n}
    \right],
  \end{align}
  where the expectation is taken over the uniform distribution on $\mathrm{Dil}_r(\mcC)$.
  The circuit complexity of implementing the quantum superchannel $\Sigma$ is $O(\mathrm{poly}(n, k, \log d_{I_i}, \log d_{O_i}))$, where $d_{I_i} = \dim(\mcI_i)$ and $d_{O_i} = \dim(\mcO_i)$.
\end{theorem}

\subsection{Another construction of parallel random dilation superchannel}
\label{sec:another_construction}

\begin{figure*}
  \centering
  \includegraphics[width=\linewidth]{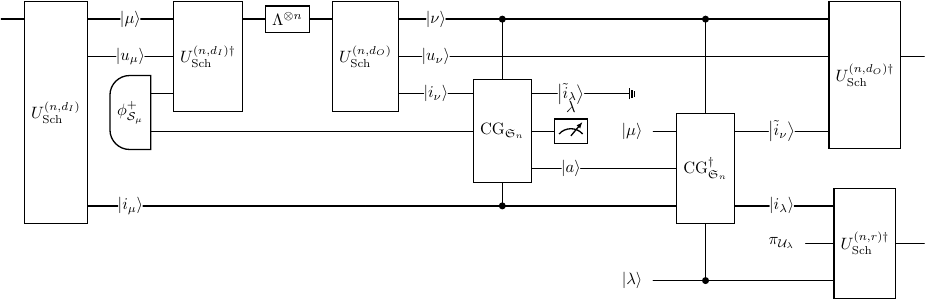}
  \caption{Quantum circuit implementing another construction of the random dilation superchannel $\mcC$ that transforms $n$ parallel queries of an unknown quantum channel $\Lambda:\mcL(\mcI) \to \mcL(\mcO)$ into $n$ parallel queries of a randomly chosen dilation isometry of $\Lambda$.
  The quantum gate $\mathrm{CG}_{\mathfrak{S}_n}$ represents the Kronecker transform with irreps specified by the controlled qubits.
  The control wire from the state $\ket{i_\mu}$ represents the irrep $\mu$.
  The quantum state $\ket{\phi^+_{\mcS_\mu}}$ is the maximally entangled state $\ket{\phi^+_{\mcS_\mu}} \coloneqq {1\over \sqrt{m_\mu}}\sum_{i} \ket{i}\otimes \ket{i}$ using the Young--Yamanouchi basis $\{\ket{i}\}_{i=1}^{m_\mu}$ of $\mcS_\mu$.
  }
  \label{fig:channel_random_dilation_kronecker}
\end{figure*}

We present another construction of the random dilation superchannel based on the quantum Schur transform and the Clebsch--Gordan transform over the symmetric group, which we call the \emph{Kronecker transform}, as shown in Fig.~\ref{fig:channel_random_dilation_kronecker}.
The Kronecker transform is defined as follows.
Given two irreps $g_\mu, g_\nu$ of the symmetric group $\mathfrak{S}_n$ for $\mu, \nu\vdash n$, the tensor product representation $g_\mu \otimes g_\nu$ can be decomposed into irreps as
\begin{align}
  g_\mu(\sigma) \otimes g_\nu(\sigma) \simeq \bigoplus_{\lambda \vdash n} g_\lambda(\sigma) \otimes \1_{g_{\mu\nu\lambda}} \quad \forall \sigma\in \mathfrak{S}_n,
\end{align}
where $g_{\mu\nu\lambda}$ is the multiplicity of $g_\lambda$ appearing in the decomposition of $g_\mu \otimes g_\nu$, which is called the Kronecker coefficient.
This decomposition defines an isomorphism between the spaces
\begin{align}
  \mcS_\mu\otimes \mcS_\nu \simeq \bigoplus_{\lambda \vdash n} \mcS_\lambda \otimes \CC^{g_{\mu\nu\lambda}},
\end{align}
and we define the Kronecker transform $\mathrm{CG}_{\mu\nu}$ that represents the basis change corresponding to this isomorphism, where we take an orthonormal basis $\{\ket{a}\}$ of the multiplicity space $\CC^{g_{\mu\nu\lambda}}$.
In the quantum circuit, the Kronecker transform is represented as $\mathrm{CG}_{\mathfrak{S}_n}$ with the controlled qubits in the states $\ket{\mu}$ and $\ket{\nu}$.
We also define the maximally entangled state $\ket{\phi^+_{\mcS_\mu}} \coloneqq {1\over \sqrt{m_\nu}}\sum_{i} \ket{i}\otimes \ket{i}$ using the Young--Yamanouchi basis $\{\ket{i}\}_{i=1}^{m_\mu}$ of $\mcS_\mu$.
We can show that the quantum circuit shown in Fig.~\ref{fig:channel_random_dilation_kronecker} implements the random dilation superchannel (see Appendix~\ref{appendix:another_construction_random_dilation_superchannel} for the proof).

\subsection{Approximate lifting of parallel dilation isometries to sequential dilation unitaries}
\label{subsec:lifting_parallel_to_sequential}

We present the lifting of parallel dilation isometries to sequential dilation unitaries.
The special case ($d=1$) of this theorem reproduces the sample-to-query lifting shown in Ref.~\cite{tang2025conjugate}.
\begin{theorem}
  \label{thm:lifting_parallel_to_sequential}
  For any isometry $V: \CC^d \to \CC^D$ and a fixed isometry $V_0: \CC^d \to \CC^D$, the quantum channel $\mcV^{\otimes n}$ for $n = O(m^2d^2/\varepsilon)$ can be converted into $m$ sequential queries of random unitary $U_\theta$ on $\mcH \coloneqq \CC^2 \otimes \CC^D$ such that $U_\theta[\ket{0} \otimes V_0 (\cdot)] = e^{-i\theta}\ket{1} \otimes V(\cdot)$ with approximation error $\varepsilon$ in the diamond norm, where $\theta\in [0,2\pi)$ is randomly chosen.
\end{theorem}

To this end, we first construct a quantum circuit converting $n$ parallel queries of the isometry channel $\mcV$ given by $\mcV^{\otimes n}$ into $n$ parallel queries of conditional isometry $V(\theta)$ (see Fig.~\ref{fig:conditional_isometry}~(a)):
\begin{align}
  \label{eq:parallel_conditional_isometry}
  \mathbb{E}_{\theta\sim [0, 2\pi)} \left[
    V(\theta)^{\otimes n}(\cdot)V(\theta)^{\dagger\otimes n}
  \right], 
\end{align}
where $\theta$ is randomly chosen from the uniform distribution on $[0, 2\pi)$, and $V(\theta): \CC^d \to \mcH$ is defined by
\begin{align}
  \label{eq:conditional_isometry}
  V(\theta) \coloneqq {1\over \sqrt{2}} (e^{i\theta}\ket{0} \otimes V_0 + \ket{1} \otimes V),
\end{align}
using a fixed isometry operator $V_0: \CC^d \to \CC^D$.

\begin{figure*}
  \centering
  \includegraphics[width=0.7\linewidth]{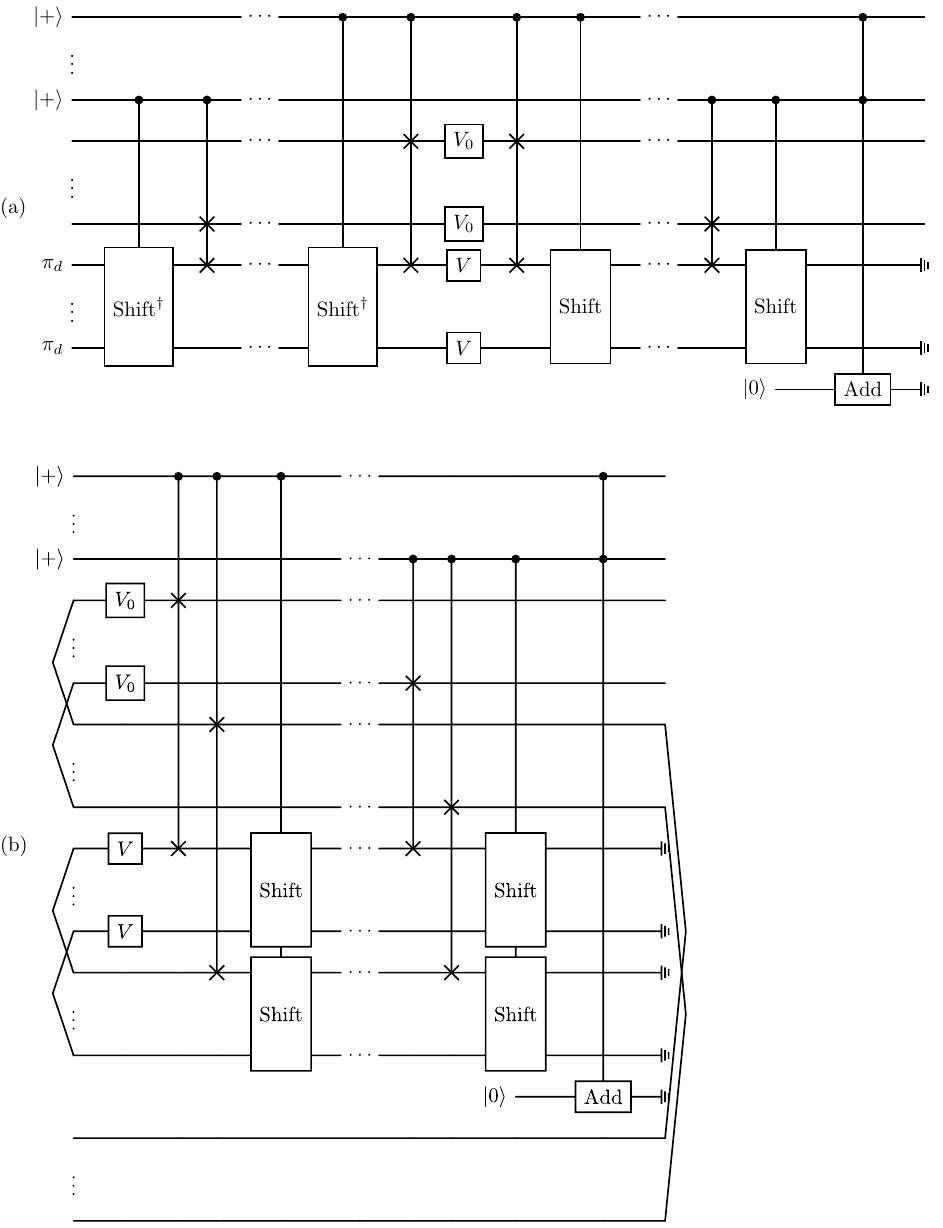}
  \caption{(a) Quantum circuit implementing $n$ parallel queries of the conditional isometry $V(\theta)$ defined in Eq.~\eqref{eq:conditional_isometry}, where $\mathrm{Shift}$ and $\mathrm{Add}$ are unitary operators defined in Eqs.~\eqref{eq:shift_operator} and \eqref{eq:add_operator}, and $\pi_d$ is the maximally mixed state given by $\pi_d\coloneqq \1_d/d$.
  (b) A proof idea of implementing the quantum channel given in Eq.~\eqref{eq:parallel_conditional_isometry} using the P-CTC.
  Similarly to Fig.~\ref{fig:random-dilation-superchannel}~(b), the controlled SWAP, $V_0$, $\mathrm{Shift}$ and the trace out operations sandwiched by the maximally entangled states and P-CTCs in Fig.~\ref{fig:conditional_isometry}~(b) can be rewritten as the controlled SWAP, $V_0$, $\mathrm{Shift}^\dagger$, and the maximally mixed state, respectively, to obtain the quantum circuit in (a).}
  \label{fig:conditional_isometry}
\end{figure*}
\begin{lemma}
  \label{lemma:parallel_to_sequential_conditional_isometry}
  For any isometry $V: \CC^d \to \CC^D$ and a fixed isometry $V_0: \CC^d \to \CC^D$, the quantum channel $\mcV^{\otimes n}$ can be converted into $n$ parallel queries of the random conditional isometry $V(\theta)$ using the quantum circuit shown in Fig.~\ref{fig:conditional_isometry}~(a).
\end{lemma}
\begin{proof}
We present the construction of this conversion using a similar idea to the random dilation superchannel, i.e., we first construct the Choi state transformation, retrieve the quantum channel transformation using the P-CTC, and then replace the P-CTC and the maximally entangled state with deterministic quantum channels.
To this end, we consider the conversion from samples to conditional samples shown in Ref.~\cite{tang2025conjugate}.
The task is, for a fixed state $\ket{\psi_0}\in \CC^d$, to convert $n$ copies of an unknown pure state $\ket{\psi}\in \CC^d$ into $n$ copies of the random conditional sample given by
\begin{align}
  \mathbb{E}_{\theta\in [0,2\pi)} \ketbra{\psi(\theta)}^{\otimes n},
\end{align}
where $\ket{\psi(\theta)}$ is defined by
\begin{align}
  \ket{\psi(\theta)} \coloneqq {1\over \sqrt{2}}(e^{i\theta} \ket{0}\otimes \ket{\psi_0} + \ket{1} \otimes \ket{\psi}).
\end{align}
This transformation is achieved by the quantum circuit shown in Fig.~\ref{fig:sample_to_conditional_sample}, which consists of the Shift operator defined by
\begin{align}
  \label{eq:shift_operator}
  \mathrm{Shift}\ket{i_1\ldots i_n} \coloneqq \ket{i_2 i_3 \ldots i_n i_1},
\end{align}
for all $i_1, \ldots, i_n \in [d]$, and the $\mathrm{Add}$ operator defined by
\begin{align}
  \label{eq:add_operator}
  \mathrm{Add}(\ket{i_1\ldots i_n} \otimes \ket{0}) \coloneqq \ket{i_1\ldots i_n} \otimes \ket{\sum_{k=1}^n i_k},
\end{align}
for all $i_1, \ldots, i_n \in \{0,1\}$.

\begin{figure}
  \centering
  \includegraphics{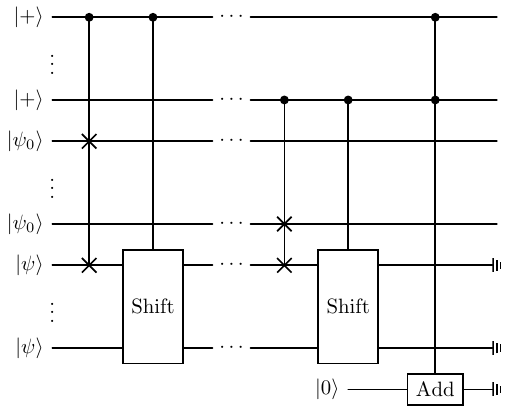}
  \caption{Quantum circuit implementing the conversion from $n$ copies of an unknown pure state $\ket{\psi}\in \CC^d$ into $n$ copies of the random conditional state given by $\mathbb{E}_{\theta\in [0,2\pi)} \ketbra{\psi(\theta)}^{\otimes n}$.}
  \label{fig:sample_to_conditional_sample}
\end{figure}

We consider the Choi state $\dket{V}$ corresponding to an isometry operator $V:\CC^d\to \CC^D$.
Then, the Choi state of the conditional isometry given in Eq.~\eqref{eq:parallel_conditional_isometry} is given by
\begin{align}
  \dket{V(\theta)} = {1\over \sqrt{2}} (e^{i\theta}\ket{0} \otimes \dket{V_0} + \ket{1}\otimes \dket{V}).
\end{align}
Thus, the Choi state
\begin{align}
  \mathbb{E}_{\theta\sim [0,2\pi)} \dketbra{V(\theta)}^{\otimes n}
\end{align}
corresponding to the Choi state of $n$ parallel queries of the random conditional isometry given in Eq.~\eqref{eq:parallel_conditional_isometry} is obtained by applying the conversion from samples to conditional samples to the Choi state $\dket{V}^{\otimes n}$.
By inserting the P-CTCs as shown in Fig.~\ref{fig:conditional_isometry}~(b), we can retrieve the quantum channel given in Eq.~\eqref{eq:parallel_conditional_isometry}.
The controlled SWAP, $V_0$, and the trace out operations sandwiched by the maximally entangled states and P-CTCs in Fig.~\ref{fig:conditional_isometry}~(b) can be rewritten as the controlled SWAP, $V_0$, and the maximally mixed state $\pi_d\coloneqq \1_d/d$, respectively, in the same way as the proof of Thm.~\ref{thm:random-dilation-superchannel}.
Then, we obtain the quantum circuit shown in Fig.~\ref{fig:conditional_isometry}~(a).
\end{proof}

\begin{proof}[Proof of Thm.~\ref{thm:lifting_parallel_to_sequential}]
  We construct the conversion from parallel dilation isometries to sequential dilation unitaries by using similar ideas of Ref.~\cite{tang2025conjugate}.
  We first convert $n$ parallel queries of the isometry channel $\mcV$ into $n$ parallel queries of the random conditional isometry $V(\theta)$ by using the quantum circuit shown in Fig.~\ref{fig:conditional_isometry}~(a) as shown in Lem.~\ref{lemma:parallel_to_sequential_conditional_isometry}.
  Then, we convert parallel $n$ queries of $V(\theta)$ into $n$ copies of the quantum state $V(\theta)V(\theta)^\dagger/d$ by applying them on the maximally mixed state.
  Finally, we convert $n$ copies of the quantum states $V(\theta)V(\theta)^\dagger/d$ to $m$ queries of reflection unitary:
  \begin{align}
    U_\theta &= 2V(\theta)V(\theta)^\dagger-\1_{\mcH},
  \end{align}
  where $\1_\mcH$ is the identity operator on $\mcH$.
  By using the state exponentiation~\cite{lloyd2014quantum,kimmel2017hamiltonian}, we can implement $U_\theta$ by using $O(d^2/\delta)$ copies of $V(\theta)V(\theta)^\dagger/d$ in the approximation error $\delta$ in the diamond norm.
  Thus, we can implement $m$ queries of $U_\theta$ by using $n = m \cdot O(d^2/\delta) = O(m^2 d^2/\varepsilon)$ copies of $V(\theta)V(\theta)^\dagger/d$ with total approximation error $\varepsilon = O(m\delta)$.
  Then, $U_\theta$ implements the desired dilation unitary since
  \begin{align}
    &U_\theta(\ket{0} \otimes V_0 \ket{\psi})\nonumber\\
    &= (e^{i\theta} \ket{0} \otimes V_0 + \ket{1} \otimes V)e^{-i\theta} \ket{\psi} - \ket{0}\otimes V_0\ket{\psi}\\
    &= e^{-i\theta}\ket{1}\otimes V\ket{\psi}
  \end{align}
  holds for all $\ket{\psi}\in \CC^d$
\end{proof}

\section{Main result 2: No-go theorem on sequential random dilation}
\label{sec:sequential_random_dilation}

\begin{figure*}
  \centering
  \includegraphics[width=\linewidth]{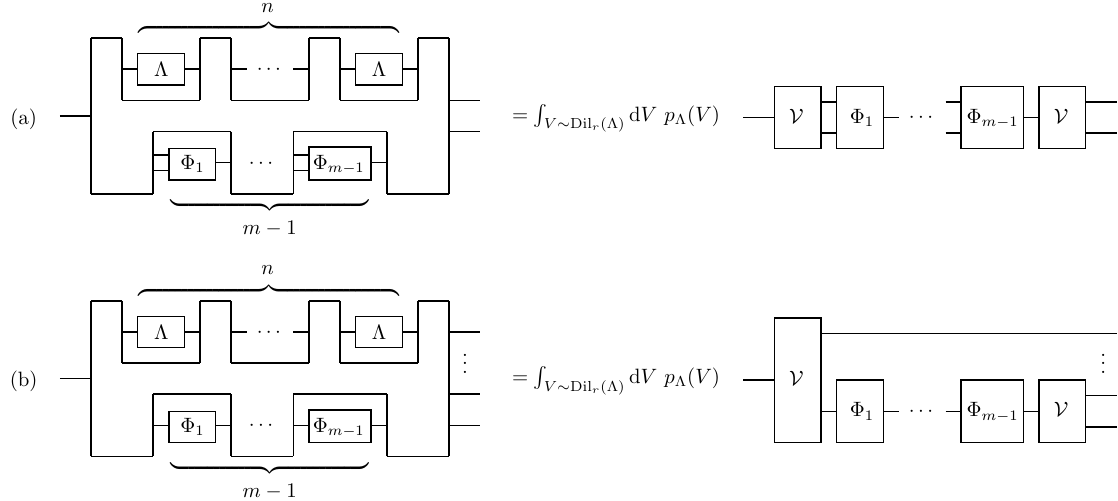}
  \caption{(a) An $n\to m$ random dilation superchannel $\mcS$ is an $(n+m-1)$-slot QC-CC superchannel such that $\mcS(\Lambda^{\otimes n} \otimes \cdot)$ is a probabilistic mixture of $\mcC_V$ with $V\in \mathrm{Dil}_r(\Lambda)$ for any $\Lambda$.
  (b) A weak $n\to m$ random dilation superchannel $\mcS$ is a weak version of an $n\to m$ random dilation superchannel where the environmental systems $\mcE_i$ are not fed back to the inputs of $\Phi_j$ for $i\in [m], j\in [m-1]$.}
  \label{fig:sequential_random_dilation}
\end{figure*}

As a natural extension of the parallel random dilation superchannel, we consider the task of sequential random dilation of quantum channels.
With $n$ queries of an unknown quantum channel $\Lambda: \mcL(\mcI) \to \mcL(\mcO)$, we aim to exactly implement sequential $m$ queries of the dilation isometry $V$ of $\Lambda$ where $V$ is randomly chosen from $\mathrm{Dil}_r(\Lambda)$.
However, we show that such a transformation is impossible for any $m\geq 2$ and $n \leq \max(2, 2^{\lfloor \log_2 \min(d_{\mcI}, d_{\mcO}) \rfloor}-1)$ within the conventional quantum circuit model, where $d_{\mcI} = \dim(\mcI)$ and $d_{\mcO} = \dim(\mcO)$.
To show this, we first define the task of sequential random dilation more rigorously and then show the no-go theorem.

\subsection{Definition of sequential random dilation}
We define the task of sequential random dilation as follows.
The sequential $m$ queries of a dilation isometry $V: \mcI \to \mcE\otimes \mcO$ is represented as a quantum comb $\mcC_V: \bigotimes_{i=1}^{m-1}[\mcL(\mcE_i \otimes \mcO_{i}) \to \mcL(\mcI_{i+1})] \to [\mcL(\mcI_1) \to \mcL(\mcE_m \otimes \mcO_m)]$ with $\mcI_i \simeq \mcI, \mcO_i \simeq \mcO, \mcE_i \simeq \mcE$ for $i\in [m-1]$ defined by
\begin{align}
  \mcC_V(\Phi_1 \otimes \cdots \otimes \Phi_{m-1}) \coloneqq \mcV \circ \Phi_{m-1} \circ \mcV \circ \cdots \circ \Phi_1\circ \mcV
\end{align}
for all linear maps $\Phi_i: \mcL(\mcE_i \otimes \mcO_i) \to \mcL(\mcI_{i+1})$.
We also define a \emph{weakly} sequential $m$ queries of a dilation isometry $V: \mcI \to \mcE\otimes \mcO$ as a quantum comb $\mcC_V^{\mathrm{(weak)}}: \bigotimes_{i=1}^{m-1}[\mcL(\mcO_{i}) \to \mcL(\mcI_{i+1})] \to [\mcL(\mcI_1) \to \mcL(\bigotimes_{j=1}^{m} \mcE_j \otimes \mcO_m)]$ such that
\begin{widetext}
\begin{align}
  \mcC_V^{\mathrm{(weak)}}(\Phi_1 \otimes \cdots \otimes \Phi_{m-1})\coloneqq [\mcV \otimes \1_{\mcE_1\cdots \mcE_{m-1}}] \circ [\Phi_{m-1} \otimes \1_{\mcE_1\cdots \mcE_{m-1}}] \circ [\mcV \otimes \1_{\mcE_1\cdots \mcE_{m-2}}]\circ \cdots \circ [\Phi_1 \otimes \1_{\mcE_1}]\circ \mcV
\end{align}
\end{widetext}
for all linear maps $\Phi_i: \mcL(\mcO_i) \to \mcL(\mcI_{i+1})$.
It is weak in the sense that the outputs in the environment systems $\mcE_i$ are not fed back to the inputs of $\Phi_j$ for $i\in [m], j\in [m-1]$.
An $n\to m$ random dilation superchannel $\mcS$ is an $(n+m-1)$-slot QC-CC superchannel $\mcS$ with the following properties [see also Fig.~\ref{fig:sequential_random_dilation}~(a)]:
\begin{itemize}
  \item The causal order of the last $(m-1)$ slots is fixed.
  \item For any $\Lambda$, $\mcS(\Lambda^{\otimes n} \otimes \cdot)$ is a probabilistic mixture of $\mcC_V$ with $V\in \mathrm{Dil}_r(\Lambda)$, i.e., there exists a probability distribution $\{p_\Lambda(V)\}_{V\in \mathrm{Dil}_r(\Lambda)}$ such that $\mcS(\Lambda^{\otimes n} \otimes \cdot) = \int_{V\in \mathrm{Dil}_r(\Lambda)} \dd V p_\Lambda(V) \mcC_V$.
\end{itemize}
A \emph{weak} sequential $n\to m$ random dilation superchannel $\mcS$ is an $(n+m-1)$-slot QC-CC superchannel $\mcS$ that satisfies the same properties as above but with $\mcC_V$ replaced by $\mcC_V^{\mathrm{(weak)}}$ [see also Fig.~\ref{fig:sequential_random_dilation}~(b)].
By definition, any sequential $n\to m$ random dilation superchannel can be used to construct a weak sequential $n\to m$ random dilation by delaying the outputs in the environmental systems $\mcE_i$.

\subsection{No-go results}

We first show the no-go theorem for $N$-qubit quantum channels.

\begin{theorem}
  \label{thm:multi_qubit_no-go}
  For any $N\geq 1$, there is no weak sequential $n\to 2$ random dilation superchannel for $N$-qubit quantum channels if
  \begin{align}
    n\leq \max(2, 2^N-1).
  \end{align}
\end{theorem}
\begin{proof}
We prove this theorem based on the no-go result on the quantum switch simulation~\cite{bavaresco2025simulating}.
As described in Sec.~\ref{subsec:switch_simulation}, the quantum switch $\mcS_\mathrm{switch}(\Lambda \otimes \Phi)$ cannot be implemented with a QC-CC superchannel with a single query to $\Phi$ and $n\leq \max(2, 2^N-1)$ queries to $\Lambda$.
On the other hand, the quantum channel $\mcS_\mathrm{switch}(\Lambda \otimes\Phi)$ can be implemented by a weak sequential 2 queries of a dilation isometry $V$ of $\Lambda$ and a single query to $\Phi$ by the quantum circuit shown in Fig.~\ref{fig:switch_simulation_by_dilation}.
Thus, if there exists a weak $n\to 2$ random dilation superchannel $\mcS$ for $n \leq \max(2, 2^N-1)$, we can implement $\mcS_\mathrm{switch}(\Lambda \otimes \Phi)$ by a QC-CC superchannel, which contradicts Thm.~\ref{thm:quantum_switch_no-go}.
\end{proof}

\begin{figure}
  \centering
  \includegraphics[width=\linewidth]{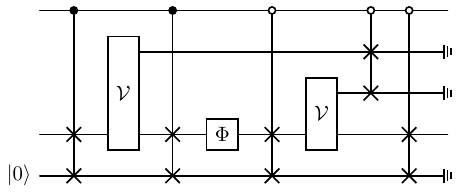}
  \caption{Quantum circuit implementing $\mcS_\mathrm{switch}(\Lambda \otimes \Phi)$ by a weak sequential 2 queries of a dilation isometry $V$ of $\Lambda$ and a single query to $\Phi$.}
  \label{fig:switch_simulation_by_dilation}
\end{figure}

This theorem can be extended to arbitrary-dimensional quantum channels as follows.

\begin{theorem}
  \label{thm:arbitrary_dim_no-go}
  For any $d_{\mcI}, d_{\mcO}\geq 2$, there is no weak $n\to 2$ random dilation superchannel for quantum channels with input dimension $d_{\mcI}$ and output dimension $d_{\mcO}$ if
  \begin{align}
    n\leq \max(2, 2^{\lfloor \log_2 \min(d_{\mcI}, d_{\mcO}) \rfloor}-1).
  \end{align}
\end{theorem}
\begin{proof}[Proof sketch]
We first extend Thm.~\ref{thm:quantum_switch_no-go} to $d$-dimensional quantum channels (i.e., $d_{\mcI}=d_{\mcO}=d$) by extending the no-go result on the quantum switch simulation to $d$-dimensional quantum channels:
\begin{corollary}
  \label{cor:quantum_switch_no-go_arbitrary_dim}
There is no QC-CC superchannel $\mcC$ such that $\mcC(\Lambda^{\otimes n} \otimes \Phi) = \mcS_\mathrm{switch}(\Lambda \otimes \Phi)$ for all $d$-dimensional quantum channels $\Lambda, \Phi$ if
  \begin{align}
    n\leq \max(2, 2^{\lfloor \log_2 d \rfloor}-1).
  \end{align}
\end{corollary}
This can be shown by embedding $\lfloor \log_2 d \rfloor$-qubit quantum channels into $d$-dimensional quantum channels and applying Thm.~\ref{thm:quantum_switch_no-go}.
Then, we extend the no-go result to quantum channels with arbitrary input and output dimensions by embedding $d$-dimensional quantum channels into them for $d=\min(d_{\mcI},d_{\mcO})$.
See Appendix~\ref{appendix:proof_arbitrary_dim_no-go} for the details of the proof.
\end{proof}

\section{Applications}
\label{sec:applications}

As an immediate application of Thm.~\ref{thm:random-dilation-superchannel}, we show that any quantum superchannel with parallel queries of a dilation isometry can be transformed into a quantum superchannel with parallel queries of a quantum channel.

\begin{theorem}
  \label{thm:dilation_isometry_to_channel}
If there exists a quantum superchannel $\mcC$ with $n$ parallel queries of an isometry channel $\mcV$, then there exists a quantum superchannel $\mcC'$ with $n$ parallel queries of a quantum channel $\Lambda:\mcL(\mcI) \to \mcL(\mcO)$ such that
\begin{align}
  \mcC'[\Lambda^{\otimes n}] = \mathbb{E}_{V\sim \mathrm{Dil}_r(\Lambda)}\mcC[\mcV^{\otimes n}].
\end{align}
The circuit complexity of implementing the quantum superchannel $\mcC'$ is given by $n_{\mcC} + O(\mathrm{poly}(n, \log d_{\mcI}, \log d_{\mcO}))$, where $n_{\mcC}$ is the circuit complexity of $\mcC$, $d_{\mcI} = \dim(\mcI)$, and $d_{\mcO} = \dim(\mcO)$.
\end{theorem}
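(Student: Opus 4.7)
The plan is to obtain $\mcC'$ by simply precomposing $\mcC$ with the random dilation superchannel $\Xi$ supplied by Thm.~\ref{thm:random-dilation-superchannel}. Concretely, I would define
\begin{align}
  \mcC'[\Lambda^{\otimes n}] \coloneqq \mcC[\Xi(\Lambda^{\otimes n})].
\end{align}
Since $\Xi$ is itself a superchannel whose implementation uses exactly $n$ parallel queries of its input channel, the composite $\mcC' = \mcC \circ \Xi$ remains a superchannel that accesses $\Lambda$ only through $n$ parallel queries, as required.

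To verify the claimed identity, I would first invoke Thm.~\ref{thm:random-dilation-superchannel} to substitute
\begin{align}
  \Xi(\Lambda^{\otimes n}) = \mathbb{E}_{V \sim \mathrm{Dil}(\Lambda)}\left[\mcV^{\otimes n}\right],
\end{align}
and then move the expectation outside of $\mcC$ using the fact that any quantum superchannel acts linearly on its input channel. This immediately yields $\mcC'[\Lambda^{\otimes n}] = \mathbb{E}_{V \sim \mathrm{Dil}(\Lambda)} \mcC[\mcV^{\otimes n}]$, matching the statement (modulo the apparent typographical slip of $\mcC'$ for $\mcC$ on the right-hand side).

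For the circuit complexity, I would add the cost of implementing $\Xi$ to that of $\mcC$. Thm.~\ref{thm:random-dilation-superchannel} gives a gate count of $O(\mathrm{poly}(n, \log d_I, \log d_O))$ for $\Xi$; inserting the more refined $\widetilde{O}(n^4)$ dependence (which arises from the explicit cost of the quantum Schur transform and the QFT over $\mfS_n$ used inside $\Xi$) on top of the user-supplied $n_{\mcC}$ gives exactly the stated bound. The only mild subtlety to check is that the output wires of $\Xi$ match the input wires expected by $\mcC$ on each of its $n$ parallel isometry slots, which follows from the construction in Fig.~\ref{fig:random-dilation-superchannel}~(a); I expect no real obstacle since everything reduces to a sequential composition of two circuits whose behaviors are already established.
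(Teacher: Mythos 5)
Your proposal is correct and coincides with the paper's (implicit) argument: the theorem is presented as an immediate consequence of Thm.~\ref{thm:random-dilation-superchannel}, obtained by precomposing $\mcC$ with the random dilation superchannel $\Xi$ and using linearity of superchannels to pull the expectation outside, with the complexity being $n_{\mcC}$ plus the cost of $\Xi$. You also correctly flag the typographical slip of $\mcC'$ for $\mcC$ on the right-hand side.
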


This result generalizes the previous result by Ref.~\cite{chen2025quantum}, which shows a similar statement for the quantum tester corresponding to the case where $\mcI_1 \simeq \CC$ and $\mcO_k$ is the classical system, but without presenting an efficient construction.

We employ Thm.~\ref{thm:dilation_isometry_to_channel} to construct an efficient storage-and-retrieval of an unknown quantum channel $\Lambda: \mcL(\mcI) \to \mcL(\mcO)$ with Kraus rank at most $r$, which is a task to store the action of $\Lambda$ into a quantum state called the program state and to retrieve the action of $\Lambda$ from the program state later only using operations independent of $\Lambda$~\cite{nielsen1997programmable,bisio2010optimal,sedlak2019optimal,sedlak2020probabilistic,yang2020optimal,sedlak2024storage}. 
The size of the program state is called the program cost.
The previous best-known result requires $O(d_{\mcI}^2/\sqrt{\varepsilon})$ queries of $\Lambda$ with the program state in $O(d_{\mcI}^2/\sqrt{\varepsilon})$ qubits, where $\varepsilon$ is the diamond-norm error of the retrieved channel~\cite{yoshida2025quantum}.
If the input channel is an isometry channel, the program cost can be reduced to ${d_{\mcI} d_{\mcO}-1 \over 2} \log O(\varepsilon^{-1})$ qubits with the same query complexity~\cite{yoshida2025quantum}.
By combining this result with Thm.~\ref{thm:random-dilation-superchannel}, we can reduce the program cost for general quantum channels as follows.
This construction shows the exponential reduction of the program cost in $\varepsilon$ compared to the previous best-known result.
\begin{theorem}
  There exists a storage-and-retrieval algorithm of an unknown quantum channel $\Lambda:\mcL(\mcI) \to \mcL(\mcO)$ with Kraus rank at most $r$ by using $O(d_{\mcI}^2 / \sqrt{\varepsilon})$ queries of $\Lambda$ with the program state in ${d_{\mcI} d_{\mcO} r-1\over 2} \log O(\varepsilon^{-1})$ qubits, where $\varepsilon$ is the retrieval error in the diamond norm, $d_{\mcI} = \dim(\mcI)$, and $d_{\mcO} = \dim(\mcO)$.
\end{theorem}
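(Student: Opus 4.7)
The plan is to reduce storage-and-retrieval for a general channel $\Lambda$ of Kraus rank at most $r$ to the known storage-and-retrieval algorithm for isometries of Ref.~\cite{yoshida2025quantum}, by inserting the random dilation superchannel $\Xi$ of Thm.~\ref{thm:random-dilation-superchannel} in front of the isometry algorithm. Concretely, for the storage phase I first pass the $n = O(d_I^2/\sqrt{\varepsilon})$ parallel queries of $\Lambda$ through $\Xi$, which produces $n$ parallel queries of an isometry channel $\mcV$ with $V:\mcI\to\mcE\otimes\mcO$ drawn from the uniform distribution on $\mathrm{Dil}(\Lambda)$ and $\dim(\mcE)=r$, and then feed the resulting $n$ isometry queries into the isometry storage procedure of Ref.~\cite{yoshida2025quantum}. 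Since $V$ has input dimension $d_I$ and output dimension $rd_O$, that procedure yields a program state of size $\tfrac{d_I\cdot rd_O - 1}{2}\log O(\varepsilon^{-1}) = \tfrac{d_I d_O r - 1}{2}\log O(\varepsilon^{-1})$ qubits, matching the claim.

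For the retrieval phase, I run the isometry retrieval channel of Ref.~\cite{yoshida2025quantum} on the program state, obtaining an approximation $\tilde{\mcV}$ with $\|\tilde{\mcV}-\mcV\|_{\diamond}\le\varepsilon$ for every fixed $V$, and then discard the environment $\mcE$ to output $\tilde{\Lambda}\coloneqq \Tr_{\mcE}\circ\tilde{\mcV}$. By linearity of superchannels in their channel input, the overall channel realised on $\Lambda^{\otimes n}$ is $\mathbb{E}_{V\sim\mathrm{Dil}(\Lambda)}[\Tr_{\mcE}\circ\tilde{\mcV}]$, and since $\Tr_{\mcE}\circ\mcV=\Lambda$ for every dilation $V$, convexity together with monotonicity of the diamond norm under partial trace gives
\begin{align}
  \left\|\mathbb{E}_{V\sim\mathrm{Dil}(\Lambda)}\!\left[\Tr_{\mcE}\circ\tilde{\mcV}\right] - \Lambda\right\|_{\diamond} \le \mathbb{E}_{V\sim\mathrm{Dil}(\Lambda)}\!\left[\|\tilde{\mcV}-\mcV\|_{\diamond}\right] \le \varepsilon,
\end{align}
which is exactly the content of Thm.~\ref{thm:dilation_isometry_to_channel} specialised to the isometry S\&R superchannel.

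For the resource counting, $\Xi$ turns $n$ parallel queries of $\Lambda$ into $n$ parallel queries of $\mcV$ with no query inflation, so $n=O(d_I^2/\sqrt{\varepsilon})$ queries of $\Lambda$ suffice, and the program state is the one produced by the isometry algorithm and is thus of the advertised size. The most delicate point, and what I expect to require the most care in a rigorous write-up, is to verify that the randomness generated inside $\Xi$ is coherently correlated with the program state: $\Xi$ outputs a single random isometry acting on all $n$ arms, not $n$ independent samples, so the program state encodes the specific $V$ that was drawn and the retrieval step applies the right $\tilde{\mcV}$ for that same $V$. Once this coherence is in place, the error analysis reduces to the pointwise-in-$V$ guarantee of Ref.~\cite{yoshida2025quantum} followed by the convexity step above, and both the query complexity and the program-cost bound are inherited from that reference.
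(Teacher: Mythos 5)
Your proposal is correct and follows essentially the same route as the paper: the paper's (implicit) proof is precisely to feed the $n=O(d_I^2/\sqrt{\varepsilon})$ parallel queries through the random dilation superchannel of Thm.~\ref{thm:random-dilation-superchannel} (equivalently, invoke Thm.~\ref{thm:dilation_isometry_to_channel}), run the isometry storage-and-retrieval of Ref.~\cite{yoshida2025quantum} on the resulting dilation isometry $V:\mcI\to\mcE\otimes\mcO$ with output dimension $rd_O$, and trace out $\mcE$ at retrieval, with the error controlled by convexity and data processing exactly as you wrote. The coherence concern you flag is already absorbed into the linearity statement $\mcC'[\Lambda^{\otimes n}]=\mathbb{E}_{V\sim\mathrm{Dil}(\Lambda)}\mcC'[\mcV^{\otimes n}]$ of Thm.~\ref{thm:dilation_isometry_to_channel}, so no further argument is needed.
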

For the case when $d_{\mcI} = d_{\mcO} r$, we can construct the superreplication superchannel and inversion superchannel of an unknown quantum channel $\Lambda$ by combining our random dilation superchannel with the unitary superreplication superchannel and probabilistic unitary inversion shown in Refs.~\cite{chiribella2015universal,quintino2019reversing,quintino2019probabilistic}, respectively (see Fig.~\ref{fig:fig_channel_transformation}).
The former outputs $\Theta(n^{\alpha})$ copies of an unknown quantum channel $\Lambda$ from $n$ copies of $\Lambda$ for any $\alpha<2$ approximately, and the latter outputs the Petz recovery map~\cite{Petz1986sufficientsubalgebas} for the reference state given by the maximally mixed state, probabilistically and exactly.

\begin{figure}
  \centering
  \includegraphics{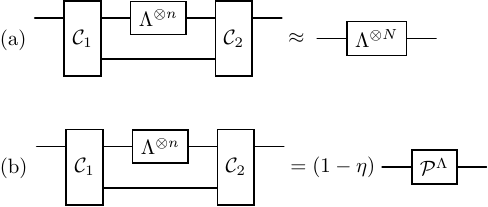}
  \caption{Quantum circuits implementing transformation of an unknown quantum channel $\Lambda:\mcL(\mcI) \to \mcL(\mcO)$ with Kraus rank at most $r = d_{\mcI}/d_{\mcO}$.
  (a) Quantum superchannel that transforms $n$ parallel queries of $\Lambda$ into $N$ approximate parallel queries of $\Lambda$ for $N = \Theta(n^{\alpha})$ with any $\alpha<2$.
  (b) Probabilistic quantum superchannel that transforms $n = O(d_{\mcI}^3/\eta)$ parallel queries of $\Lambda$ into the Petz recovery map $\mcP^\Lambda$ of $\Lambda$ for the reference state given by the maximally mixed state with the success probability at least $1-\eta$.}
  \label{fig:fig_channel_transformation}
\end{figure}

\begin{theorem}
  \label{thm:channel_superreplication}
There exists a quantum superchannel $\mcC$ such that
\begin{align}
  \lim_{n\to\infty} \mathbb{P}\left[\|\mcC[\Lambda^{\otimes n}](\ketbra{\psi}) - \Lambda^{\otimes N} (\ketbra{\psi})\|_1\leq \epsilon\right] = 1
\end{align}
holds for any $N = \Theta(n^{\alpha})$ for $\alpha<2$ and quantum channel $\Lambda:\mcL(\mcI) \to \mcL(\mcO)$ with Kraus rank at most $r = d_{\mcI}/d_{\mcO}$ by using $n$ parallel queries of $\Lambda$, where $\|\cdot\|_1$ is the trace norm, $\epsilon>0$ is an arbitrary constant, and $\mathbb{P}$ represents the probability taken over the Haar measure on the input system $(\CC^{d_{\mcI}})^{\otimes N}$.
The circuit complexity is $O(\mathrm{poly}(n, \log d_{\mcI}))$, where $d_{\mcI} = \dim(\mcI)$ and $d_{\mcO} = \dim(\mcO)$.
\end{theorem}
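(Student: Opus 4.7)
The key structural observation that makes Thm.~\ref{thm:channel_superreplication} possible is that when the Kraus rank satisfies $r = d_I / d_O$, every Stinespring dilation isometry $V \colon \mcI \to \mcE \otimes \mcO$ in $\mathrm{Dil}(\Lambda)$ is in fact a $d_I$-dimensional unitary, because $\dim(\mcE \otimes \mcO) = r \cdot d_O = d_I = \dim(\mcI)$. The plan is to leverage this coincidence to compose the random dilation superchannel $\Xi$ of Thm.~\ref{thm:random-dilation-superchannel} with the unitary superreplication protocol of Ref.~\cite{chiribella2015universal} and then trace out the dilation environment.

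Concretely, I would build $\mcC$ in three layers. First, apply $\Xi$ from Thm.~\ref{thm:random-dilation-superchannel} to the $n$ parallel queries of $\Lambda$, obtaining a channel whose action equals $\mathbb{E}_{V \sim \mathrm{Dil}(\Lambda)}[\mcV^{\otimes n}]$, with each $V$ a $d_I$-dimensional unitary. Second, post-compose with the unitary superreplication superchannel $\mathrm{SuperRep}_{n \to N}$ of Ref.~\cite{chiribella2015universal}, which consumes $n$ parallel queries of an unknown $d_I$-dimensional unitary and returns $N = \Theta(n^\alpha)$ approximate parallel queries for any $\alpha < 2$. By linearity of superchannels in the channel slot, the composed object realises $\mathbb{E}_V[\mathrm{SuperRep}_{n\to N}[\mcV^{\otimes n}]]$. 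Third, on each of the $N$ output wires of dimension $d_I = r \cdot d_O$, trace out the environment factor $\mcE$ of dimension $r$, keeping only $\mcO$. The circuit complexity claim then follows from combining the $O(\mathrm{poly}(n, \log d_I, \log d_O))$ cost of Thm.~\ref{thm:random-dilation-superchannel} with the polynomial cost of the Chiribella et al.~superreplication circuit.

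For correctness I would invoke two facts. For any fixed $V \in \mathrm{Dil}(\Lambda)$, the identity $\mathrm{Tr}_{\mcE^{\otimes N}}[\mcV^{\otimes N}(\ketbra{\psi})] = \Lambda^{\otimes N}(\ketbra{\psi})$ is definitional, so the target $\Lambda^{\otimes N}(\ketbra{\psi})$ is a $V$-independent quantity. The Chiribella et al.~guarantee asserts that for any $d_I$-dimensional unitary $V$, the trace-norm error between $\mathrm{SuperRep}_{n\to N}[\mcV^{\otimes n}](\ketbra{\psi})$ and $\mcV^{\otimes N}(\ketbra{\psi})$ vanishes with probability one over Haar-random $\ket{\psi} \in (\CC^{d_I})^{\otimes N}$ as $n \to \infty$ whenever $\alpha < 2$, and by the unitary covariance of the superreplication superchannel together with Haar invariance this bound is uniform in $V$. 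Partial trace is a contraction in trace norm, and expectation over $V$ can be pulled outside the trace norm by convexity together with the triangle inequality, so the high-probability smallness survives the average over $V \sim \mathrm{Dil}(\Lambda)$ inserted by $\Xi$.

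The main obstacle I anticipate is passing cleanly from the Chiribella bound, which is most naturally phrased as an expected-trace-distance statement over Haar-random inputs for each fixed unitary, to the high-probability form in the theorem statement in the presence of the auxiliary randomness over $V$. My plan for this step is to exchange the Haar expectation over $\ket{\psi}$ with the expectation over $V$ (legitimate because the integrand is nonnegative), apply Markov's inequality to turn a vanishing expected trace distance into a vanishing probability of a large trace distance, and then use the covariance-based $V$-uniformity to absorb the $V$ average without spoiling the convergence rate in $n$.
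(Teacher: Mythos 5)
Your proposal follows essentially the same route as the paper: exploit $r = d_I/d_O$ so that every dilation isometry is a $d_I$-dimensional unitary, feed the random dilation superchannel (the paper invokes this via Thm.~\ref{thm:dilation_isometry_to_channel}) into the unitary superreplication circuit of Ref.~\cite{chiribella2015universal}, trace out the environment on each of the $N$ outputs using $\Tr_{\mcE^{\otimes N}}[\mcV^{\otimes N}(\ketbra{\psi})] = \Lambda^{\otimes N}(\ketbra{\psi})$, and control the error by contractivity of the partial trace plus convexity of the trace norm over the average on $V$. Your extra care about converting the averaged guarantee into the high-probability statement (Markov plus covariance/Haar invariance) fills in a detail the paper leaves implicit, but the argument is the same.
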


\begin{proof}
  Reference~\cite{chiribella2015universal} provides a quantum circuit $\mcC'$ transforming $n$ parallel queries of an unknown unitary channel $\mcU(\cdot) = U(\cdot)U^\dagger$ for $U\in \U(d_{\mcI})$ into $N$ approximate parallel queries of $\mcU$ such that\footnote{Although the analysis in Ref.~\cite{chiribella2015universal} is given for the fidelity error, it can be straightforwardly extended to the trace-norm error using the Fuchs--van de Graaf inequality~\cite{fuchs2002cryptographic}.}
  \begin{align}
    \lim_{n\to\infty} \mathbb{P}\left[\|\mcC'[\mcU^{\otimes n}](\ketbra{\psi}) - \mcU^{\otimes N} (\ketbra{\psi})\|_1\leq \epsilon\right] = 1
  \end{align}
  for any $N = \Theta(n^{\alpha})$ for $\alpha<2$.
  By applying Thm.~\ref{thm:dilation_isometry_to_channel} to the quantum circuit $\mcC'$, we obtain the desired quantum circuit $\mcC$ transforming $n$ parallel queries of an unknown quantum channel $\Lambda:\mcL(\mcI) \to \mcL(\mcO)$ with Kraus rank at most $r = d_{\mcI}/d_{\mcO}$ into $N$ approximate parallel queries of $\Lambda$ such that
  \begin{align}
    \mcC''[\Lambda^{\otimes n}] = \mathbb{E}_{U\sim \mathrm{Dil}_r(\Lambda)}\mcC'[\mcU^{\otimes n}].
  \end{align}
  We define a quantum superchannel $\mcC$ by
  \begin{align}
    \mcC = \Tr_{\mcE^{\otimes N}} \circ \mcC''.
  \end{align}
  For any dilation unitary $U: \mcI \to \mcE \otimes \mcO$, since
  \begin{align}
    \Tr_{\mcE^{\otimes N}} [\mcU^{\otimes N}(\ketbra{\psi})] = \Lambda^{\otimes N}(\ketbra{\psi})
  \end{align}
  holds, we have
  \begin{align}
    &\|\mcC[\Lambda^{\otimes n}](\ketbra{\psi}) - \Lambda^{\otimes N} (\ketbra{\psi})\|_1\nonumber\\
    &= \|\Tr_{\mcE^{\otimes N}}\mathbb{E}_{U\sim \mathrm{Dil}_r(\Lambda)}[\mcC'[\mcU^{\otimes n}](\ketbra{\psi}) -\mcU^{\otimes N}(\ketbra{\psi})]\|_1\\
    &\leq \mathbb{E}_{U\sim \mathrm{Dil}_r(\Lambda)}\|\mcC'[\mcU^{\otimes n}](\ketbra{\psi}) -\mcU^{\otimes N}(\ketbra{\psi})\|_1,
  \end{align}
  where we use the data-processing inequality and the convexity of the trace norm in the last inequality.
  Therefore, we obtain the desired result.
  The original unitary superreplication circuit consists of the quantum Schur transforms, and thus we obtain the circuit complexity $O(\mathrm{poly}(n, \log d_{\mcI}))$.
\end{proof}

\begin{theorem}
  \label{thm:channel_petz_recovery}
  There exists a probabilistic quantum superchannel $\{\mcC_S, \mcC_F\}$ such that
  \begin{align}
    \mcC_S[\Lambda^{\otimes n}] &= (1-\eta) \mcP^{\Lambda} (\cdot)
  \end{align}
  for any quantum channel $\Lambda:\mcL(\mcI) \to \mcL(\mcO)$ with Kraus rank at most $r = d_{\mcI}/d_{\mcO}$ by using $n = O(d_{\mcI}^3/\eta)$ parallel queries of $\Lambda$, where $\mcP^{\Lambda}$ is the Petz recovery map of $\Lambda$ with the reference state given by the maximally mixed state.
  The circuit complexity is given by $O(\mathrm{poly}(n))$.
\end{theorem}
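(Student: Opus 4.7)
The plan is to combine Thm.~\ref{thm:random-dilation-superchannel} with the probabilistic unitary inversion algorithm of Refs.~\cite{quintino2019reversing,quintino2019probabilistic}, exploiting that, under the Kraus-rank hypothesis $r=d_I/d_O$, any dilation isometry $V:\mcI\to\mcE\otimes\mcO$ is in fact a unitary between $d_I$-dimensional spaces, and that the Petz recovery map with maximally mixed reference state factorizes through this unitary.

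First, I would apply Thm.~\ref{thm:random-dilation-superchannel} to the $n$ parallel queries of $\Lambda$ to obtain $n$ parallel queries of $\mcV$ for a randomly sampled dilation $V$. A direct computation, using $VV^\dagger=\1_{\mcE\otimes\mcO}$, the resulting identity $\Lambda(\1_\mcI/d_I)=\1_\mcO/d_O$, and the adjoint formula $\Lambda^\dagger(A)=V^\dagger(\1_\mcE\otimes A)V$, reduces the Petz formula to
\begin{align}
  \mcP^\Lambda(\rho)=V^\dagger\left(\1_\mcE/r\otimes\rho\right)V.
\end{align}
The right-hand side is invariant under replacing $V$ by $(W\otimes\1_\mcO)V$ for any unitary $W$ on $\mcE$, so it depends only on $\Lambda$ and not on the particular dilation sampled by the random dilation superchannel; consequently no averaging correction is needed even though Thm.~\ref{thm:random-dilation-superchannel} fixes a random $V$.

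Next, I would feed the $n$ queries of the $d_I$-dimensional unitary $\mcV$ into the probabilistic unitary inversion algorithm of Refs.~\cite{quintino2019reversing,quintino2019probabilistic}, which returns an exact implementation of $\mcV^\dagger$ in its success branch with success probability approaching $1$ as the number of queries grows; choosing $n=O(d_I^3/\eta)$ makes the failure probability at most $\eta$ uniformly over $\Lambda$. In the success branch I would prepend a freshly prepared maximally mixed ancilla $\1_\mcE/r$ to the input register and apply the inverted unitary, producing exactly $V^\dagger(\1_\mcE/r\otimes\rho)V=\mcP^\Lambda(\rho)$. Declaring this to be $\mcC_S$ and the complementary event to be $\mcC_F$ gives the claimed $\mcC_S[\Lambda^{\otimes n}]=(1-\eta)\mcP^\Lambda(\cdot)$.

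The main subtlety is the dilation-independence of the right-hand side of the Petz formula; without it, the random sampling of $V$ produced by Thm.~\ref{thm:random-dilation-superchannel} would degrade the success branch into an incoherent mixture of distinct Petz maps and prevent the exact identification with $\mcP^\Lambda$. Once this invariance is established, the circuit-complexity bound $O(\mathrm{poly}(n))$ follows from the polynomial complexities of Thm.~\ref{thm:random-dilation-superchannel}, of the probabilistic unitary inversion protocol, and of preparing the maximally mixed ancilla.
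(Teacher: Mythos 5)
Your proposal is correct and follows essentially the same route as the paper: compose the random dilation superchannel (Thm.~\ref{thm:random-dilation-superchannel}, via Thm.~\ref{thm:dilation_isometry_to_channel}) with the probabilistic exact unitary inversion of Refs.~\cite{quintino2019reversing,quintino2019probabilistic}, use that for $r=d_I/d_O$ the dilation is a $d_I$-dimensional unitary and $\mcP^{\Lambda}(\rho)=V^\dagger(\pi_{\mcE}\otimes\rho)V$, and note that this expression is dilation-independent so the average over $\mathrm{Dil}(\Lambda)$ collapses. The only difference is cosmetic: the paper additionally justifies the $O(\mathrm{poly}(n))$ complexity by decomposing the inversion protocol into unitary complex conjugation and port-based teleportation implemented via (mixed) Schur transforms, whereas you assert it by citation.
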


\begin{proof}
  References~\cite{quintino2019reversing,quintino2019probabilistic} provide a quantum circuit $\{\mcC_S', \mcC_F'\}$ transforming $n$ parallel queries of an unknown unitary channel $\mcU(\cdot) = U(\cdot)U^\dagger$ for $U\in \U(d_{\mcI})$ into its inverse channel $\mcU^{\dagger}(\cdot) = U^\dagger (\cdot) U$ with success probability at least $1-\eta$ by using $n = O(d_{\mcI}^3/\eta)$ parallel queries of $\mcU$, i.e., $\mcC_S'[\mcU^{\otimes n}] = (1-\eta) \mcU^\dagger$ holds, and $\mcC_F'$ corresponds to the failure branch.
  By applying Thm.~\ref{thm:dilation_isometry_to_channel} to the quantum circuit $\{\mcC_S', \mcC_F'\}$, we obtain the probabilistic quantum superchannel $\{\mcC_S'', \mcC_F''\}$ such that
  \begin{align}
    \mcC_S''[\Lambda^{\otimes n}]
    &= (1-\eta) \mathbb{E}_{U\sim \mathrm{Dil}_r(\Lambda)} \mcU^{\dagger}.
  \end{align}
  For the reference state $\sigma = \pi_{\mcI} \coloneqq {\1_{d_{\mcI}} \over d_{\mcI}}$ given by the maximally mixed state, the Petz recovery map of $\Lambda$ is given by
  \begin{align}
    \mcP^{\Lambda}(\cdot)
    &= \sigma^{1/2} \Lambda^{\dagger}\left(\Lambda(\sigma)^{-{1/2}} (\cdot) \Lambda(\sigma)^{-{1/2}}\right) \sigma^{1/2} = {1 \over r}\Lambda^{\dagger}(\cdot),
  \end{align}
  where we use the relation $\Lambda(\sigma) = \Tr_{\mcE}\mcU(\sigma) = \pi_{\mcO}$ and $\Lambda^\dagger$ is defined by $\Lambda^\dagger(\cdot) = \mcU^\dagger (\1_\mcE \otimes \cdot)$.
  Therefore, by defining the probabilistic quantum superchannel $\{\mcC_S, \mcC_F\}$ as
  \begin{align}
    \mcC_a[\Lambda^{\otimes n}](\cdot) = \mcC_a''[\Lambda^{\otimes n}](\pi_{\mcE} \otimes \cdot) \quad \forall a\in\{S, F\},
  \end{align}
  we obtain
  \begin{align}
    \mcC_S[\Lambda^{\otimes n}](\cdot)
    &= (1-\eta) \mcP^{\Lambda}(\cdot).
  \end{align}
  The original unitary inversion circuit consists of unitary complex conjugation~\cite{miyazaki2019complex} and probabilistic port-based teleportation~\cite{ishizaka2008asymptotic,ishizaka2009quantum,studzinski2017port}.
  The unitary complex conjugation can be implemented by using the quantum Schur transform, and port-based teleportation can be implemented by using the mixed Schur transform as shown in Refs.~\cite{grinko2023gelfand,grinko2023efficient,fei2023efficient,nguyen2023mixed} with the circuit complexity polynomial in $n$ and $d_{\mcI}$.
  Therefore, we obtain the circuit complexity $O(\mathrm{poly}(n, d_{\mcI})) = O(\mathrm{poly}(n))$ since $n = O(\mathrm{poly}(d_{\mcI}))$ holds.
\end{proof}
 
Note that this is the first quantum circuit to exactly implement the Petz recovery map of an unknown quantum channel.
Given access to a dilation unitary, Ref.~\cite{gilyen2022quantum} provides a quantum circuit implementing the Petz recovery map of an unknown quantum channel approximately based on the quantum singular value transformation~\cite{gilyen2019quantum}.
Reference~\cite{utsumi2025quantum} provides a quantum circuit implementing the Petz recovery map of an unknown quantum channel using the Uhlmann transformation approximately, but the approximation error is inevitable in their construction.

\section{Conclusion}
\label{sec:conclusion}
This work provides an efficient quantum circuit transforming $n$ parallel queries of an unknown quantum channel into $n$ parallel queries of a randomly chosen dilation isometry of the channel.
This transformation enables us to generalize various quantum information processing tasks for unitary and isometry channels to those for general quantum channels.
In particular, we show our protocol enables an efficient storage and-retrieval for unknown quantum channels with an exponentially improved program cost in the retrieval error, together with two other applications: superreplication of quantum channels and probabilistic implementation of the Petz recovery map of quantum channels when the Kraus rank of the input quantum channel is at most $d_{\mcI}/d_{\mcO}$.
We also show the extension to the case of quantum superchannels.
Our results open a new avenue for exploring quantum information processing tasks on general quantum channels by leveraging techniques originally developed for unitary or isometry channels.

Although this work positively answers a part of the open problem raised in Ref.~\cite{tang2025conjugate}, our result also provides a negative answer to the other part of the open problem, which is whether there exists a quantum circuit implementing the random dilation superchannel with sequential queries of the input quantum channel.
We show that such an exact sequential random dilation superchannel is impossible with subexponential overhead $o(\mathrm{poly}(\min \{d_\mcI, d_\mcO\}))$ in Thm.~\ref{thm:arbitrary_dim_no-go}.
This no-go result showcases the fundamental difference between parallel and sequential queries of quantum channels.
It also poses an open problem for future work: finding a quantum circuit that implements the \emph{approximate} or \emph{probabilistic} sequential random dilation superchannel.
By transforming the parallel random dilation isometry obtained by Thm.~\ref{thm:random-dilation-superchannel} into sequential random dilation unitaries by Thm.~\ref{thm:lifting_parallel_to_sequential}, we can implement an approximate implementation of the sequential random dilation superchannel.
The port-based teleportation~\cite{ishizaka2008asymptotic,ishizaka2009quantum,studzinski2017port} can transform parallel queries into sequential ones approximately or probabilistically.
However, both of them incur $O(\mathrm{poly}(d_{\mcI}))$ overhead in the number of queries to achieve constant success probability or approximation error.
It remains open whether we can construct a sequential random dilation superchannel with $O(\mathrm{poly}(\log d))$ overhead in the number of queries to achieve a constant success probability or approximation error.

\acknowledgments
We acknowledge helpful discussions with Jessica Bavaresco, Marco Fanizza, Filippo Girardi, Zixuan Liu, Antonio Anna Mele, Francesco Anna Mele, Marco T\'{u}lio Quintino, and Elias Theil.
This work was supported by Japan Society for the Promotion of Science (JSPS) KAKENHI Grant Numbers 23KJ0734, 23K21643, JP24K16975, JP25K00924, and JP26H02015, the MEXT Quantum Leap Flagship Program (MEXT QLEAP) JPMXS0118069605, JPMXS0120351339, JST CREST Grant Number JPMJCR25I5, JPMJCR23I3, JST ASPIRE Grant Number JPMJAP25A3, JST NEXUS Grant Number JPMJNX26C9, JPMJNX26C2, JST SPRING Grant Number JPMJSP2108, FoPM, WINGS Program, the University of Tokyo, DAIKIN Fellowship Program, the University of Tokyo, and IBM Quantum.

\section*{Note added}
During the preparation of this work, we became aware of an independent work~\cite{girardi2025random2}, which also constructs a quantum circuit implementing the random dilation superchannel shown in Fig.~\ref{fig:random-dilation-superchannel}~(a) of this work.

\bibliography{main}
\clearpage

\appendix
\onecolumngrid

\section{Another construction of random dilation superchannel}
\label{appendix:another_construction_random_dilation_superchannel}
Here we present the proof that the quantum circuit shown in Fig.~\ref{fig:channel_random_dilation_kronecker} implements the random dilation superchannel.
We apply the proof idea of Thm.~\ref{thm:random-dilation-superchannel}, which is to convert the random purification channel to the random dilation superchannel by using the P-CTC.
By applying this idea to the random purification channel shown in Ref.~\cite{tang2025conjugate}, we obtain the quantum circuit shown in Fig.~\ref{fig:fig_channel_random_dilation_kronecker}~(a).
By decomposing the quantum Schur transform $U_\mathrm{Sch}^{(n, d_{\mcI} d_{\mcO})}$ into $U_\mathrm{Sch}^{(n, d_{\mcI})}$ and $U_\mathrm{Sch}^{(n, d_{\mcO})}$ using the Kronecker transform, we obtain the quantum circuit shown in Fig.~\ref{fig:fig_channel_random_dilation_kronecker}~(b).
By bending the wires in the P-CTCs, we obtain the quantum circuit shown in Fig.~\ref{fig:fig_channel_random_dilation_kronecker}~(c), where the control on $\ket{i_\mu}$ means that the Young diagram $\mu$ is extracted from $\ket{i_\mu}$, and it is used to control the gate $\mathrm{CG}_{\mu\nu}$.
Note that the standard encoding of $i_\mu$ has $\mu$ in the last system~\cite{burchardt2025high}.
In Fig.~\ref{fig:fig_channel_random_dilation_kronecker}~(c), a P-CTC only appears after the inverse Kronecker transform $\mathrm{CG}_{\mu \nu}^\dagger$, and this P-CTC can be removed by considering the following relation on the Clebsch--Gordan coefficients [see Eq. (10), p. 298 in Ref.~\cite{klimyk1995representations}]:
\begin{align}
  \bra{i_\mu} \bra{a} \mathrm{CG}_{\lambda\nu} \ket{i_{\lambda}} \ket{i_{\nu}} = \sqrt{m_\mu \over m_\lambda} \bra{i_\lambda} \bra{a} \mathrm{CG}_{\mu\nu} \ket{i_\mu} \ket{i_\nu}.
\end{align}
From this expression, the inverse Kronecker transform $\mathrm{CG}_{\mu \nu}^\dagger$ can be replaced with the inverse Kronecker transform $\mathrm{CG}_{\lambda\nu}^\dagger$ without P-CTCs, and we obtain the quantum circuit shown in Fig.~\ref{fig:channel_random_dilation_kronecker}.

\begin{figure}
    \centering
    \includegraphics[width=\linewidth]{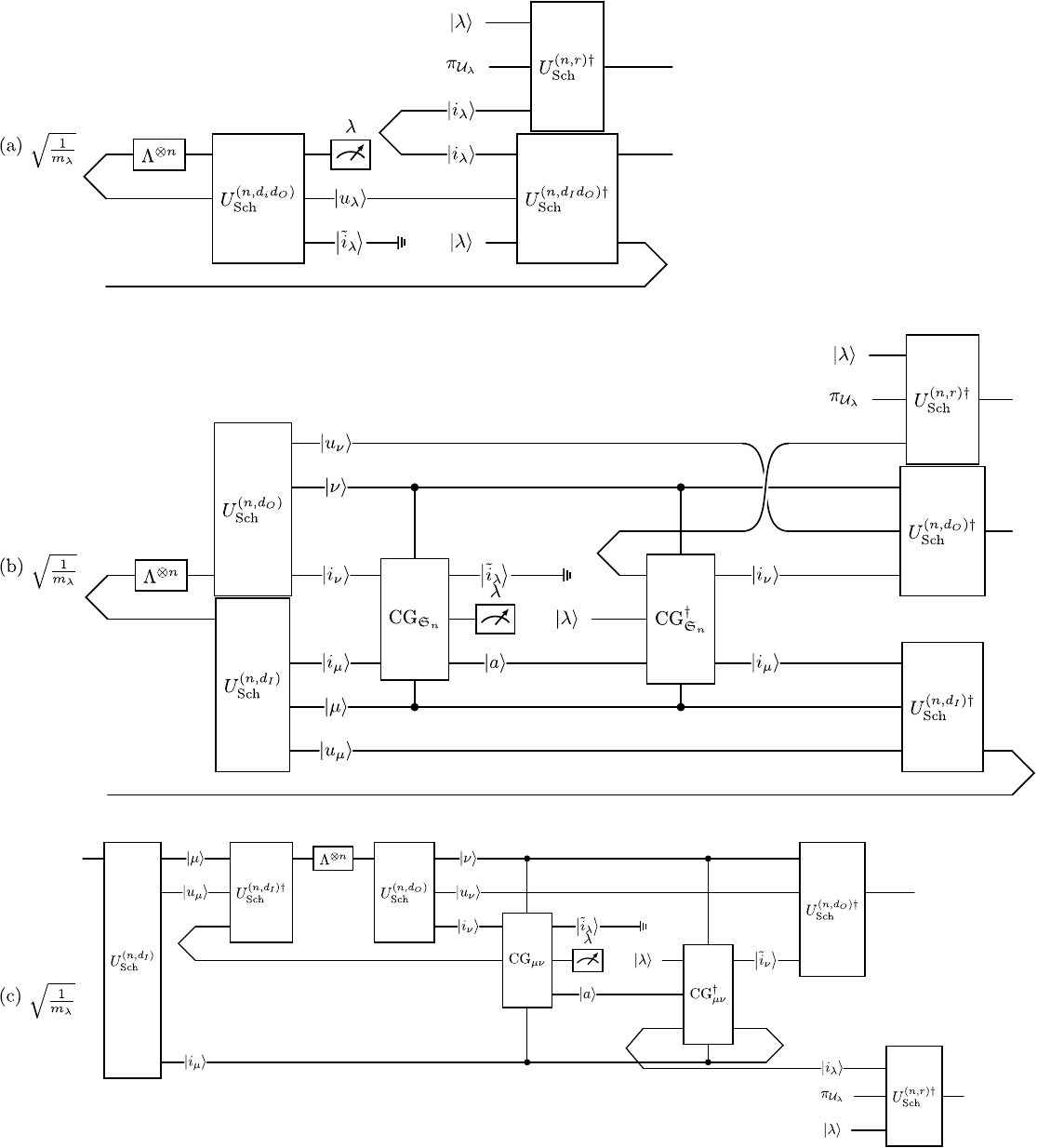}
    \caption{The proof idea to get another construction of random dilation superchannel shown in Fig.~\ref{fig:channel_random_dilation_kronecker}.}
    \label{fig:fig_channel_random_dilation_kronecker}
\end{figure}

\section{Proof of Theorem~\ref{thm:arbitrary_dim_no-go}}
\label{appendix:proof_arbitrary_dim_no-go}
We first provide the proof of Cor.~\ref{cor:quantum_switch_no-go_arbitrary_dim}.
\begin{proof}[Proof of Cor.~\ref{cor:quantum_switch_no-go_arbitrary_dim}]
  For any $d, d'$ satisfying $d<d'$, we define the embedding isometry $E_{d\to d'}:\CC^d \to \CC^{d'}$ and the restricting channel $\mcR_{d'\to d}:\mcL(\CC^{d'}) \to \mcL(\CC^{d})$ as follows:
  \begin{align}
    E_{d\to d'} &\coloneqq \sum_{i=0}^{d-1} \ketbra{i}{i},\\
    \mcR_{d'\to d}(\cdot) &\coloneqq E_{d\to d'}^\dagger (\cdot) E_{d\to d'} + \Tr[(\1_{d'} - E_{d\to d'} E_{d\to d'}^\dagger)(\cdot)] \ketbra{0}{0},
  \end{align}
  using the computational basis $\{\ket{i}\}$.
  Then, we can embed $\lfloor \log_2 d \rfloor$-qubit quantum channels $\Lambda', \Phi'$ into a $d$-dimensional quantum channels $\Lambda, \Phi$ as follows:
  \begin{align}
    \Lambda = \mcE_{2^{\lfloor \log_2 d \rfloor}\to d}  \circ \Lambda' \circ \mcR_{d\to 2^{\lfloor \log_2 d \rfloor}},\quad \Phi =  \mcE_{2^{\lfloor \log_2 d \rfloor}\to d}\circ \Phi' \circ  \mcR_{d\to 2^{\lfloor \log_2 d \rfloor}},
  \end{align}
  where $\mcE_{d\to d'}$ is the embedding channel defined by $\mcE_{d\to d'}(\cdot) = E_{d\to d'} (\cdot) E_{d\to d'}^\dagger$.
  Then, we have $(\1_2 \otimes \mcR_{d\to 2^{\lfloor \log_2 d \rfloor}})\circ \mcS_\mathrm{switch}(\Lambda \otimes \Phi)\circ (\1_2 \otimes \mcE_{2^{\lfloor \log_2 d \rfloor}\to d}) = \mcS_\mathrm{switch}(\Lambda', \Phi')$, which implies that if there exists a QC-CC superchannel $\mcS$ such that $\mcS(\Lambda^{\otimes n} \otimes \Phi) = \mcS_\mathrm{switch}(\Lambda \otimes \Phi)$ for all $d$-dimensional quantum channels $\Lambda, \Phi$ with $n\leq \max(2, 2^{\lfloor \log_2 d \rfloor}-1)$, we can implement $\mcS_\mathrm{switch}(\Lambda', \Phi')$ by a QC-CC superchannel for all $\lfloor \log_2 d \rfloor$-qubit quantum channels $\Lambda', \Phi'$, which contradicts Thm.~\ref{thm:quantum_switch_no-go}.
\end{proof}

\begin{proof}[Proof of Thm.~\ref{thm:arbitrary_dim_no-go}]
Corollary~\ref{cor:quantum_switch_no-go_arbitrary_dim} shows Thm.~\ref{thm:arbitrary_dim_no-go} for the case of $d_{\mcI}=d_{\mcO}$.
We show Thm.~\ref{thm:arbitrary_dim_no-go} by contradiction, i.e., we assume that there exists a sequential random dilation superchannel $\mcC$ for quantum channels with input dimension $d_{\mcI}$ and output dimension $d_{\mcO}$ with $n\leq \max(2, 2^{\lfloor \log_2 \min(d_{\mcI}, d_{\mcO}) \rfloor}-1)$, and we show that this assumption leads to a contradiction.
We can embed $d$-dimensional quantum channels $\Lambda'$ into quantum channels $\Lambda: \mcL(\CC^{d_{\mcI}}) \to \mcL(\CC^{d_{\mcO}})$ as follows:
\begin{align}
  \Lambda = 
  \begin{cases}
    \mcE_{d_{\mcI} \to d_{\mcO}} \circ \Lambda' & (d_{\mcI}\leq d_{\mcO})\\
    \Lambda'\circ \mcR_{d_{\mcI} \to d_{\mcO}} & (d_{\mcI}>d_{\mcO})
  \end{cases}.
\end{align}
From this observation, we define a QC-CC superchannel $\mcC'$ by
\begin{align}
  \mcC'(\Lambda'^{\otimes n} \otimes \Phi_1\otimes \cdots \otimes \Phi_m) \coloneqq
  \begin{cases}
    (\1_\mcE \otimes \mcR_{d_{\mcO} \to d_{\mcI}}) \circ \mcC(\mcE_{d_{\mcI}\to d_{\mcO}}^{\otimes n} \circ \Lambda'^{\otimes n} \otimes \Phi_1\circ \mcR_{d_{\mcO} \to d_{\mcI}}\otimes \cdots \otimes \Phi_{m-1}\circ \mcR_{d_{\mcO} \to d_{\mcI}}) & (d_{\mcI}\leq d_{\mcO})\\
    \mcC(\Lambda'^{\otimes n} \circ \mcR_{d_{\mcI} \to d_{\mcO}}^{\otimes n} \otimes \mcE_{d_{\mcO} \to d_{\mcI}} \circ \Phi_1 \otimes \cdots \otimes \mcE_{d_{\mcO} \to d_{\mcI}} \circ \Phi_{m-1}) \circ \mcE_{d_{\mcO \to d_{\mcI}}} & (d_{\mcI}>d_{\mcO})
  \end{cases}.
\end{align}
Then, $\mcC'(\Lambda'\otimes \cdot)$ implements the sequential random dilation of $\Lambda'$ for all $d$-dimensional quantum channels $\Lambda'$.
Thus, we conclude that there exists a QC-CC superchannel that implements the sequential random dilation of $\Lambda'$ for all $d$-dimensional quantum channels $\Lambda'$, which contradicts Cor.~\ref{cor:quantum_switch_no-go_arbitrary_dim}.
\end{proof}

\end{document}